\documentclass[11pt]{article}
\usepackage[normalem]{ulem}
\usepackage{subcaption}
\usepackage{booktabs}
\usepackage{rotating}
\usepackage{array}
\usepackage{mathtools}
\usepackage{bm}
\usepackage{threeparttable}
\usepackage{enumitem}
\setlist{noitemsep}
\usepackage{amsthm,amssymb}
\usepackage{mathrsfs}
\usepackage{amsfonts}
\usepackage{amsmath}
\usepackage{graphicx}
\usepackage[longnamesfirst]{natbib}
\usepackage{amssymb}
\usepackage{comment}
\usepackage{multirow}
\usepackage[onehalfspacing]{setspace}
\usepackage{snaptodo}
\usepackage{float}
\usepackage[top=1.25in, bottom=1.25in, left=1.25in, right=1.25in]{geometry}
\usepackage{xcolor}
\usepackage{pdflscape}
\usepackage{bbm}
\usepackage{ragged2e}
\usepackage{floatpag}
\usepackage[hidelinks,colorlinks=true,linkcolor=black,citecolor=blue]{hyperref}
\usepackage{makecell}

\usepackage{snaptodo}
\snaptodoset{block rise=2em}
\snaptodoset{margin block/.style={font=\scriptsize}} 
\snaptodoset{chain bias=5cm} 
\newcommand{\ag}[1]{%
    \snaptodo[margin block/.append style=green!50!black]{%
	{\sloppy\textbf{Abhi}: #1}}}

 \newcommand{\mc}[1]{   \snaptodo[margin block/.append style=blue]{	\sloppy\textbf{Marghe}: #1}}
 
\usepackage{booktabs}
\newtheorem{theorem}{Theorem}

\newtheorem{definition}{Definition}

\newtheorem{lemma}{Lemma}

\numberwithin {equation}{section} 
\newtheorem{assumption}{Assumption}

\newcommand{\eigbig}{\overline{\textit{eig}}}
\newcommand{\eigsmall}{\underline{\textit{eig}}}

\theoremstyle{definition}

\numberwithin {equation}{section} 

\begin{document}


\begin{titlepage}

\title{A Nonparametric Test for Cross-Unit Spillovers\thanks{\textit{Acknowledgements}:  We are grateful to Marco Alfano, Aislinn Bohren, Arun Chandrasekhar, Ben Deaner, Guido Imbens, \'{A}ureo de Paula and Michael Vlassopoulos for helpful comments. We also thank audiences at UCL, Essex, Southampton, CFE London 2025, ISI Delhi ACEGD 2025, SEW/SEA Paris 2026 and the Celebrating James MacKinnon Conference Aarhus 2026. Comola acknowledges the financial support of the French National Research Agency (ANR-21-CE26-0002-01 and ANR-22-CE26-0001). Comunello and Gupta acknowledge the financial support of the Leverhulme Trust via grant RPG-2024-038. Gupta is supported in part by funding from the Social Sciences and Humanities Research Council of Canada.}}
\author{Margherita Comola\thanks{University Paris-Saclay (RITM) and Paris School of Economics. Email: margherita.comola@psemail.eu} \and Camila Comunello\thanks{Department of Economics, Universidad Torcuato Di Tella. Email:  camila.comunello@utdt.edu} \and Abhimanyu Gupta\thanks{Department of Economics, Queen's University, Dunning Hall, 94 University Avenue, Kingston K7L 3N6, Canada. Email:  abhimanyu.g@queensu.ca}}
\date{\today 
}

\maketitle
\begin{abstract}
\noindent 
Cross-unit dependence is pervasive in empirical applications and complicates econometric inference, especially when spillovers operate in nonlinear ways.
We propose a novel nonparametric test for cross-unit spillovers that may operate through peers’ attributes, peers’ outcomes, or both. The test is straightforward to implement, as it requires only estimation under the null hypothesis of no cross-unit spillovers, and is shown to have a convenient asymptotic standard normal distribution. It is also versatile, accommodating data generated by a wide range of interaction structures. We present four empirical illustrations showing that the proposed test can yield substantively different conclusions about the presence of cross-unit spillovers than existing approaches.
\\
\vspace{0in}\\
\noindent\textbf{Keywords:}  Nonparametric test; Cross-unit spillovers; Social interactions; Interference\\
\vspace{0in}\\
\noindent\textbf{JEL Codes:} C21, C14 \\
\bigskip
\end{abstract}
\setcounter{page}{0}
\thispagestyle{empty}
\end{titlepage}
\pagebreak \newpage


\section{Introduction}

Cross-unit dependence is pervasive in economic and social interactions and poses a fundamental challenge for econometric inference.\footnote{\cite{VivianoRudder2024} report that approximately 40\% of experimental papers published in top-five economics journals in 2020 discuss spillovers as a potential threat to identification.}
This
 paper deals with the two most common sources of cross-unit spillovers, which arise through the attributes channel and the outcome channel. To illustrate, consider the canonical  cross-sectional linear model, where the outcome of the unit (e.g. individual) is regressed on own attributes. 
 First, the attributes of other units (`peers') may affect $i$'s outcome. We term this as \textbf{`covariate ($\boldsymbol{c}$) spillovers'}.  The second source of cross-unit dependence refers to the case where the outcomes of other units affect $i$'s outcome. We call this \textbf{`outcome ($\boldsymbol{y}$) spillovers'}. In this paper, we propose a test for unknown nonparametric spillovers operating through one or both channels, establish its asymptotic properties, and illustrate its applicability in four diverse settings. We term this the $\boldsymbol{s}$ \textbf{test}.

Cross-unit spillovers have received considerable attention from applied economists in a broad range of  contexts. These include, \emph{inter alia}, disease transmission \citep{MiguelKremer2004, Ozier2018}, educational outcomes \citep{ sacerdote2001peer, LaliveCattaneo2009, BoboniFinan2009,AvvisatiEtAl2013}, employment decisions \citep{DufloSaez2003, BrownLaschever2012} and technology adoption \citep{OsterThornton2012,BanerjeeEA2013,CaiEA2015}.  

A strand of the literature on (broadly defined) `peer effects'  has explicitly modeled cross-unit dependence in observational data via the attribute and/or the outcome channel, depending on the setting. Oftentimes, cross-unit dependence is modeled solely through the attribute channel, even though outcome spillovers could also be incorporated due to economic considerations.\footnote{The exclusion restriction that peers’ attributes serve as a reduced-form sufficient statistic for their outcomes is frequently imposed. However, when the research design permits, spillovers operating through both peers’ covariates and peers’ realized outcomes can be jointly identified, offering a sharper understanding of the underlying economic mechanisms \citep{BDF2009, BursztynFiorin2017}.}
  A related line of work focuses on treatment-mediated spillovers, which are a first-order concern in the context of impact evaluation as they violate the Stable Unit Treatment Value Assumption (SUTVA), which asserts that an individual’s potential outcomes should be independent of peers’ treatment assignments.
  In response to this concern, it has become increasingly common to design cluster-randomized experiments generating exogenous variation in peers’ treatment status.\footnote{Cluster-randomized trials (also known as `two-stage randomization experiments', `randomized-saturation experiments', `partial-population experiments') randomly assign different treatment rates across different clusters.}

A practical challenge is that spillovers are typically modelled through linear functions of peers' attributes and outcomes. While convenient, such specifications may fail to detect economically meaningful forms of cross-unit dependence when agents respond nonlinearly to their social or economic environment. In many settings, behavior depends on the distribution of peers' characteristics rather than on simple averages. For example, a small fraction of adopters may induce complementary behavior, whereas widespread adoption may generate substitutability.\footnote{Nonlinear adoption dynamics have been documented both theoretically and empirically \citep{bandiera2006social, young2009innovation, acemoglu2016networks}.} As a result, spillover effects may vary across the distribution and partially offset one another, causing linear specifications to understate or fail to detect cross-unit dependence.


We develop a nonparametric test for cross-unit dependence arising through the attribute channel, the outcome channel, or both. The test is inspired by classical Lagrange Multiplier (LM) diagnostics, such as the RESET test. We construct a test statistic for the null of no spillovers against flexible nonparametric alternatives, which are approximated using a series expansion. Because the test follows an LM approach, estimation is required only under the null. As a result, the procedure accommodates rich forms of cross-unit dependence while requiring only the estimation of a familiar multiple linear regression model.
We establish that the test is asymptotically standard normal under the null and is consistent. 
These results are derived under a cluster-robust framework, allowing for forms of within-cluster dependence commonly encountered in applied work. 
Extensions to alternative error dependence structures—such as serial correlation or more general forms of spatial dependence—are conceptually straightforward.


Our test is far-reaching in that it is versatile in its data requirements, which is an important advantage. First, it accommodates data defined through a variety of interaction structures, including those based on blocks or links. Block-type data are partitioned into separate self-exclusive groups within which all units are assumed to interact. 
Blocks may represent villages or schools or nuclear households for individuals,  geographical area and/or productive sectors for firms. 
Alternatively, network-type data contain detailed links between units, which may or may not overlap  (e.g. $i$ is linked to $j$ and $j$ is linked to $k$, but $i$ is not linked to $k$). This is the case for self-declared link data in household surveys, or trade data among firms from administrative records.
 Our test accommodates both data structures. Second, it allows for heterogeneous spillovers via multiple interaction matrices, as we justify in Appendix \ref{sec:app_ext_mult}. Third, it allows the interaction structure to be incomplete or measured noisily. In applied work, interaction data are often measured poorly yet still convey useful information. By embedding our test within a latent-space framework, Appendix \ref{sec:app_ext_netform} states high level conditions under which the perturbations induced by measurement error are asymptotically negligible, ensuring that inference remains valid.

Most of the methodological literature on cross-unit dependence focuses on the estimation of causal parameters under interference. By and large, existing approaches rely on parametric assumptions to achieve this aim \citep{Rosenbaum2007,HudgensHalloran2008, Hirano2010, LiuHudgens2014, BairdEtAl2018, Arduini2020, McNealisMoodieDean2024_JRSSC, viviano_etal2024}. However, more recently nonparametric and semiparametric methods are being increasingly adopted.  \cite{VazquezBare2023} works in a nonparametric identification framework to propose estimators for spillover effects in experiments, while \cite{viviano2025} studies policy targeting with interference in a semiparametric framework. A semiparametric approach is also used in \cite{Hoshin2023}, which develops treatment effects models with strategic interaction in treatment decisions, while \cite{Hoshino2024} use nonparametric methods to conduct causal inference under unknown interference structure and in more recent work provide a specification test for the latter \citep{Hoshino2026}. 

By proposing a nonparametric test for cross-unit spillovers, this paper complements the existing literature by providing practitioners with an easy-to-implement diagnostic tool to guide the design, validation, and refinement of estimation strategies. If no spillovers are detected, the test provides empirical support for ruling out interference concerns and proceeding with standard estimation strategies—such as linear intent-to-treat regressions. If instead the test detects nonlinear cross-unit dependence, the estimation strategy should be adapted to account for its potentially substantial effects.\footnote{While our test does not explicitly suggest the nonparametric functional form which best describes the data at hand, a variety of suitable econometric methods are available, see e.g. \cite{Jenish2012a, Jenish2016, Xu2015, Xu2018}.} Our test can also be useful prior to rolling out a large-scale survey, when researchers, based on the pilot, must decide whether and how to adjust the design to account for cross-unit dependence.\footnote{For instance, a researcher may apply our test to interaction data collected in a pilot study to assess whether detailed link information is required, or whether the treatment intensity should be exogenously varied across clusters in a subsequent scale-up. Because both design choices can entail substantial additional survey costs, they are warranted primarily when spillovers are expected to play an important role.} 

We illustrate the implementation and economic relevance of our test through four empirical applications that revisit spillovers in distinct settings. In the first illustration, we examine peer effects in a high-skill environment, namely professional golf tournaments \citep{GuryanKroftNotowidigdo}. 
The second studies the impact of interracial roommate assignment on stereotype attitudes and academic performance at the University of Cape Town \citep{CornoLaFerraraBurns}. The third examines a mixed-ability randomized deskmate intervention on student performance in Chinese elementary schools \citep{wuzhangwang2023}. The fourth revisits peer effects in academic research among French economists \citep{bosquetcombesetal2022}.
Across all four settings, we show that our test is able to detect cross-unit dependence through peers’ attributes and/or peers’ outcomes in many cases where linear functional forms fail to do so. To paraphrase a longstanding criticism of nonparametric specification tests, they can be akin to using a nuclear weapon against an ant, as few parametric specifications survive their scrutiny. Our applications, however, show that our test is sufficiently discerning to not reject the null when there is little evidence of spillovers.

Section \ref{sec:method} introduces the test statistic, while Section \ref{sec:asymptotics} characterizes its asymptotic behavior. In Section \ref{sec:implementation} we provide an implementation guide that includes advice on choosing the tuning parameters, and in Section \ref{illustrations} we illustrate the test by revisiting four existing studies. Section \ref{Conclusions} concludes. We show that our test controls size well in a Monte Carlo study of finite performance in Appendix \ref{app:sims}. In Appendix \ref{sec:app_ext} we extend our method to heterogeneous cross-unit dependence, and to embedded graphs with noisy measurement and parametric modeling of the underlying link formation structure. 
Appendices \ref{appendix:proofs} and \ref{appendix:lemmas} contain theorem proofs and auxiliary lemmas respectively, while Appendix \ref{app_C} reproduces the original results for our empirical demonstrations.


\section{Method and test statistic}\label{sec:method}
We write the general model with both $ \boldsymbol {c}$ and $\boldsymbol {y}$ spillovers in scalar notation as
\begin{equation}\label{model_1}
	y_i=  f(w_i' y) + x_i^\prime \beta +\sum_{j=1}^l g_j(w_i'c_j)+ \epsilon_i, \ \ \ i=1,....,n,
\end{equation}
where $f(\cdot):\mathbb{R}\rightarrow\mathbb{R}$ is an unknown function that captures `outcome ($\boldsymbol{y}$) spillovers', and $g_j(\cdot):\mathbb{R}\rightarrow\mathbb{R}$ are also unknown functions that capture `attribute ($\boldsymbol{c}$) spillovers'. $y=\left(y_1,\ldots,y_n\right)'$ is an observed outcome vector and  $W=\left(w_1,\ldots,w_n\right)'$ is a social weight matrix representing cross-unit interactions that is either fixed or exogenous conditional on observables, and has zero diagonal.\footnote{Since our approach is cast within an instrumental-variables framework, it naturally accommodates endogeneity of the term $w_i' y$ whenever valid instruments are available. In the context of network-type data, it is compatible with instrumental-variable strategies designed to address outcome spillovers that are endogenous due to simultaneity \citep{BDF2009} or due to assortative link formation on unobservables \citep{jochmans2023}.} $x_i'$ is the $i$-th row of an $n\times k$ regressor matrix $X$ that can have endogenous elements as long as instruments are available, $c_j$, $j=1,\ldots,l$, are some $n\times 1$ vectors of exogenous regressors and $\epsilon_i$ is an unobserved disturbance.\footnote{In some applications, it is useful to write the model as \[
	y_i=  f\left(w_i' y^*\right) + x_i^\prime \beta +\sum_{j=1}^l g_j(w_i'c_j)+ \epsilon_i, \ \ \ i=1,....,n,
\] where $y^*$ measures the same outcome as $y$ but need not have the same entries. For instance, in a peer effects study, $y$ could be an academic outcome for a given subsample but $y^*$ could be the outcome for a different subsample.} The data are observed as $n_g$ observations in each of $g=1,\ldots,G$ clusters, so that $n=\sum_{g=1}^G n_G$. We take $n_g$ as fixed with $\sup_{g=1,\ldots,G}n_g<\infty$ and therefore $n\sim G$, i.e. our sample size grows like the number of clusters.  Our approach gives rise to three different testing options:
\begin{enumerate}
\item Jointly test for both $\boldsymbol {c}$ and $\boldsymbol {y}$ spillovers (the `$\boldsymbol{cy}$ test').

\item Omit $f(w_i'y)$ from the general model (\ref{model_1}) and test for only $\boldsymbol {c}$ spillovers (the `$\boldsymbol{c}$ test').

\item Omit $g_j(s)=0,\;j=1,\ldots,l,$ from the general model (\ref{model_1}) and test for only $\boldsymbol {y}$ spillovers (the `$\boldsymbol{y}$ test').
\end{enumerate}
\noindent The corresponding null hypotheses are: 
\begin{eqnarray}
&&\boldsymbol{cy}\text{ test, }\mathcal{H}_0: f(s)=0 \text{ and } g_j(s)=0,\;j=1,\ldots,l,\label{nullxy}\\
&& \boldsymbol{c}\text{ test, }	\mathcal{H}_0: g_j(s)=0,\;j=1,\ldots,l, \label{nullx}\\
&& \boldsymbol{y}\text{ test, }		\mathcal{H}_0: f(s)=0,\label{nully}
\end{eqnarray}
for all $s\in support(s)$, which implies in all cases that the null model is $y_i=x_i'\beta+\epsilon_i$, i.e. a standard linear regression. Given these three possible testing choices, we combine them to define our recommended $\boldsymbol{s}$ test, which has the following rule-of-thumb decision rule (on which we elaborate in Section \ref{sec:implementation}):

\begin{definition}\label{def:s-test}(Rejection rule of the $\boldsymbol{s}$ test) Reject the null hypothesis of no spillovers if at least one of the $\boldsymbol{c}$, $\boldsymbol{y}$ or $\boldsymbol{cy}$ tests rejects the null hypothesis.
\end{definition}

The $\boldsymbol{cy}$ test jointly includes nonlinear spillovers in both the covariate/attribute and outcome channels, while the $\boldsymbol{c}$ test and $\boldsymbol{y}$ test examine the covariate and outcome channels individually, respectively. 
Most applied work focuses on linear spillovers in covariates and extending that to nonlinearity via the $\boldsymbol{c}$ test seems natural, while augmenting for outcome spillovers through the $\boldsymbol{cy}$ test demonstrates the full power of our approach.
For ease of exposition we present asymptotic results and notation for the most general case covered by the $\boldsymbol{cy}$ test in (\ref{nullxy}).

Let $\psi_i(s)$, for $i=1,\ldots,p$, be a user-chosen set of basis functions (our applications use Hermite polynomials) such that 
\begin{equation}\label{lambda_prime}
	f(s)=\sum_{i=1}^p \mu_{f,i} \psi_i(s)+r_f(s),\;\; g_j(s)=\sum_{i=1}^p \mu_{g_j,i} \psi_i(s)+r_{g_j}(s),\;j=1,\ldots,l,
\end{equation}
with $p=p_n$ a divergent deterministic sequence, i.e. $p\rightarrow\infty$ as $n\rightarrow\infty$, $\mu_f= (\mu_{f,1},\ldots, \mu_{f,p})^\prime$, $\mu_{g_j}= (\mu_{g_j,1},\ldots, \mu_{g_j,p})^\prime$  vectors of unknown series coefficients and $r_f(s)$, $r_{g_j}(s)$ approximation errors. We define our approximate null hypothesis as
\begin{equation}\label{approximate_null}
	\mathcal{H}_{0A}: \mu_f=0 \text{ and }\mu_{g_j}=0,\; j=1,\ldots,l, \text{ for some } \beta,
\end{equation}
which is a set of $q=p(l+1)$ restrictions, so that $q\rightarrow\infty$ as $n\rightarrow\infty$. This indicates that standard fixed-dimension asymptotics will not work for our test. Our test statistic is based on determining if the moment conditions for the instrumental variables (IV) estimate of $\beta$ under the null hypothesis are close enough to zero. OLS is of course a special case of this. 

Now, for each $i=1,\ldots,p$, define the $n\times 1$ vector $\Upsilon_{f,i}(y)= \left(\psi_i(w_1'y),\ldots,\psi_i(w_n'y)\right)'$
and the $n\times 1$ vectors $\Upsilon_{g_j,i}(c_j)= \left(\psi_i(w_1'c_j),\ldots,\psi_i(w_n'c_j)\right)'$, and write
\[
\Upsilon_{f,g}=\begin{pmatrix} \Upsilon_{f,1}(y)& \ldots& \Upsilon_{f,p}(y)&\Upsilon_{g_1,1}(c_1)& \ldots& \Upsilon_{g_1,p}(c_1)&\ldots& \Upsilon_{g_l,1}(c_l)& \ldots& \Upsilon_{g_l,p}(c_l) \end{pmatrix},
\]
which is an $n\times q$ matrix. Denote $\mu=(\mu_f',\mu_{g_1}',\ldots,\mu_{g_l}')'$. Given  the expansion in (\ref{lambda_prime}), the series approximated IV objective function is
\begin{equation}\label{2slsobj}
	\mathcal{F}_p( \beta, \mu, y)= \frac{1}{n}\left( y-\Upsilon_{f,g}\mu-X\beta\right)^\prime \mathcal{P}_Z\left( y-\Upsilon_{f,g}\mu-X\beta\right),
\end{equation}
where $Z$ is an $n\times m$ matrix of valid instruments, with $m\geq q+k$, and $\mathcal{P}_Z= Z(Z^\prime Z)^{-1} Z^\prime$. Next, define the $n\times (q+k) $ matrix  $U= \begin{pmatrix} \Upsilon_{f,g}& X \end{pmatrix}$. OLS is a special case with $Z=U$. 

Define the $(q+k)\times 1$ gradient vector $\tilde{d}( \beta, y)$ of (\ref{2slsobj}) under $\mathcal{H}_{0A}$ as
\begin{equation}\label{d}
	\tilde{d}(\beta, y)= \left.\frac{\partial \mathcal{F}( \mu, \beta, y)}{\partial (\mu, \beta)^\prime } \right\vert_{\mu=0}= -\frac{2}{n}U^\prime \mathcal{P}_Z  (y -  X\beta).
\end{equation}
Denoting by   $\hat{\beta}$ some consistent estimate of $\beta$, e.g. IV or OLS, under $\mathcal{H}_{0A}$,  the gradient evaluated at the corresponding residuals is
\begin{equation}\label{dhat}
	\hat{d}= \tilde{d}\left( \hat{\beta}, y\right)=  -\frac{2}{n}U^\prime \mathcal{P}_Z  (y - X\hat{\beta}).
\end{equation}
Let  
$\hat{J}=n^{-1}Z^\prime U$, where $\hat{J}$ is $m \times (q+k) $. Next, define the $m\times m$ matrices $\hat{M}= n^{-1}{Z^\prime Z}$ and $\hat{\Phi}=n^{-1}{Z^\prime \hat{\Sigma} Z}$, with $\hat{\Sigma}=diag\left(\hat{\Sigma}_1,\ldots,\hat{\Sigma}_G\right)$ where $\hat{\Sigma}_g$ has typical $(i,j)$-th element $\hat{\epsilon}_i\hat{\epsilon}_j$ and $\hat{\epsilon}_i= y_i- x_i^\prime\hat{\beta}$, for $i,j=1,\ldots,n_g$ and $g=1,\ldots,G$. We write
\begin{equation}\label{Hhat}
	\hat{H}=4 \hat{J}^\prime \hat{M}^{-1}\hat{\Phi}\hat{M}^{-1} \hat{J},
\end{equation}
and define
our cluster robust test statistic as
\begin{equation}\label{statistic}
	\mathcal{S}= \frac{n\hat{d}^\prime \hat{H}^{-1} \hat{d} - q}{\sqrt{2q}}.
\end{equation}
This is a weighted measure of the distance of the gradient from zero, centred and rescaled to account for $q\rightarrow\infty$.  In practice, especially for small $q$, one can use $n\hat{d}^\prime \hat{H}^{-1} \hat{d}$ as the test statistic with $\chi^2_q$ critical values instead of standard normal ones. We study this in our simulations in Appendix \ref{app:sims}.

	\section{Asymptotic theory}\label{sec:asymptotics}
We commence this section by introducing some technical assumptions to establish the limiting behaviour of (\ref{statistic}) under $\mathcal{H}_{0A}$. Throughout we denote by $K$ a generic positive constant, arbitrarily large but independent of $p$ and $n$.
\begin{assumption}\label{ass:errors} $\epsilon _{i}$ 
	are random
	variables with zero mean and unknown variance $\sigma_{i} ^{2}\in[c,K]$, $c>0$, and,
	for some $\tau >0,$ $ \mathbb{E} \left\vert\epsilon _{i}\right\vert^{8+\tau }\leq K$ for $i=1,\ldots,n$. Furthermore $\epsilon_{ig}$ and $\epsilon_{jg'}$ are independent for $g\neq g'$, $g,g'=1,\ldots,G$, while $\mathbb{E}\left(\epsilon_{ig}\epsilon_{jg}\right)=\sigma_{ijg}<\infty$, $i\neq j$, and we accordingly write ${\Sigma}=diag\left({\Sigma}_1,\ldots,{\Sigma}_G\right)$.
\end{assumption}

\begin{assumption}\label{ass:regressors}  $ \mathbb{E}(x_{ir}^4 )\leq K$ and $\mathbb{E}(z_{is}^4 )\leq K$, for $i=1,\ldots,n$ and $r=1,\ldots,k$ and $s=1,\ldots,l$.
\end{assumption}
\noindent We also allow $cov(\epsilon_i, x_{ij}) \neq 0$, for some $j=1,\ldots, k$, i.e. $X$ might contain some endogenous columns. Let $X_1$ be the $n \times k_1$ matrix containing the subset of exogenous columns of $X$, while $X_2$ ($n\times k_2$, with $k_2=k-k_1$) contains the endogenous ones. Now, for a generic symmetric positive-definite matrix $A$, let $\eigbig(A)$ and $\eigsmall(A)$ denote its largest and smallest eigenvalues, respectively. For a generic matrix $B$, denote by $\left\Vert B\right\Vert=\sqrt{\eigbig(B'B)}$, i.e. the spectral norm of $B$, and by  $\left\Vert B\right\Vert_\infty$ its largest absolute row sum.

%


\begin{assumption}\label{ass:eigsandinsts} 
The $n\times n$ matrix $\Sigma$ satisfies 
 \begin{equation}\label{ass_6_Sigma}
		\limsup_{n\rightarrow\infty}\sup_{g=1,\ldots,G}\eigbig(\Sigma_g) <\infty ,\;\;\; \liminf_{n\rightarrow\infty}\inf_{g=1,\ldots,G}\eigsmall\left(\Sigma_g \right) >0,
	\end{equation} 
the $m\times m$ matrix $M=\mathbb{E}(\hat{M})$, with $m\geq q+k$ , satisfies 
	\begin{equation}\label{ass_6_a}
		\limsup_{n\rightarrow\infty}\eigbig(M) <\infty ,\;\;\; \liminf_{n\rightarrow\infty}\eigsmall\left(M \right) >0,
	\end{equation}
	and the $(q+k) \times (q+k)$ matrix  
	$L=n^{-1}\mathbb{E}(U'U)$ satisfies
	\begin{equation}\label{ass_6_b}
		\limsup_{n\rightarrow\infty}\eigbig(L) <\infty ,\;\;\; \liminf_{n\rightarrow\infty}\eigsmall\left(L \right) >0,
	\end{equation}
	for $n$ large enough.  For some $\nu>0$ satisfying $n/p^{(\nu + 1/2)}=o(1)$, 
	\[
	\sup_z r_f(z)+\sup_{j=1,\ldots,l}\sup_z r_{g_j}(z)=O_p\left(p^{-\nu} \right),
	\] as $p\rightarrow\infty$.  $\mathbb{E}\left(u_{il_2}^4\right)\leq K$ for $i=1,\ldots,n, l_1=1,\ldots,m$ and $l_2=1,\ldots,q+k$, and $\epsilon_i$ and $z_j$ are uncorrelated for each $i,j=1,\ldots,n$.
\end{assumption}
\noindent Assumption \ref{ass:eigsandinsts} imposes regularity conditions and controls the approximation errors. Specifically, (\ref{ass_6_a})-(\ref{ass_6_b}) are asymptotic boundedness and no multicollinearity conditions for matrices of increasing dimension, while under Assumption \ref{ass:errors}, (\ref{ass_6_Sigma}) also ensures that $0<\sup_{i=1,\ldots,n}\Sigma_i< \infty$ in the special case of purely heteroskedasticity robust testing i.e. when $G=n$ and $\Sigma_i$ are scalars. For the instruments $Z$ we use at least $k_2$ columns of instruments for the endogenous covariates $X_2$, and also the columns of $X_1$, $WX_1$. We also use a set of instruments of the form $\psi_r\left(\sum_j w_{ij}x_{1, jl} \right)$, where $r=1,\ldots,p$, and $x_{1, jl}$ denotes the $(j,l)$th element of $X_1$, with $l=1,\ldots,k_1$.  For more discussion on approximation error decay rates see e.g. \cite{Chen2007}. Our next assumption sets a suitable bound on cross-sectional dependence, analogous to that in \cite{Lee2016}. Conditions such as linear process representations for the underlying random variables or the near-epoch dependence conditions of \cite{Jenish2012} imply that this assumption holds. 
 \allowdisplaybreaks
\begin{assumption}\label{ass:Mhat}
	Let
	\[
		\xi = \underset{0\leq l,k \leq m} {\sup} \ \underset{j\neq i}{\underset{i=1}{\overset{n}\sum} {{\underset{j=1}{\overset{n}\sum}}}}\left\vert cov (z_{il}z_{ik},z_{jl}z_{jk}) \right\vert,
		\varkappa =\underset{0\leq l \leq m,\;\; 0\leq r \leq q+k} {\sup} \ \underset{j\neq i}{\underset{i=1}{\overset{n}\sum} {{\underset{j=1}{\overset{n}\sum}}}}\left\vert cov(z_{il}u_{ir},z_{jl}u_{jr})\right\vert,
	\]
	and assume 
	\begin{equation}\label{delta_cond}
		\xi+\varkappa =O(n), \ \ \ \text{as} \ \ \ n\rightarrow \infty.
	\end{equation}
\end{assumption}
\noindent Our null asymptotic theory first approximates the test statistic $\mathcal{S}$ with a quadratic form in $\epsilon$, and then shows that this approximation is asymptotically standard normal. Write $J=E(\hat J)$ and define
\begin{align}
	d=d( \beta_0, y)=& -\frac{2}{n} J^\prime M^{-1/2}\left(I - M^{-1/2} N\left(N^\prime M^{-1} N \right)^{-1} N^\prime M^{-1/2} \right) M^{-1/2} Z^\prime \epsilon \notag \\=& -\frac{2}{n} J^\prime M^{-1/2}\mathcal{K}_{NM} M^{-1/2} Z^\prime \epsilon, 
\end{align}
where $\mathcal{K}_{NM}= \left(I - M^{-1/2} N\left(N^\prime M^{-1} N \right)^{-1} N^\prime M^{-1/2} \right)$ is $m\times m$ and $N= \mathbb{E}(\hat{N})$, with $\hat{N}= n^{-1} Z' X$, the last being  an $m \times k$ matrix with full rank under (\ref{ass_6_b}) in Assumption \ref{ass:eigsandinsts}.
Set
\begin{align}\label{H}
	H= n\mathbb{E}(d d^\prime)  = 4 J^\prime M^{-1/2} \mathcal{K}_{NM} M^{-1/2}\Phi M^{-1/2} \mathcal{K}_{NM} M^{-1/2} J,
\end{align}
with $\Phi =n^{-1}\mathbb{E}(Z^\prime \Sigma Z)$. Under Assumptions \ref{ass:errors} and \ref{ass:eigsandinsts}, $H^{-1}$ exists and is non-singular for $n$ large enough, using Lemma \ref{lemma:Omegaeigs}. We now state the main result of this section.
\begin{theorem}\label{theorem:nulldist}
	Let Assumption \ref{ass:errors} hold with the representation $\epsilon_{i}=\sum_{r=1}^n b_{ir}\eta_r$, where $\eta_r$ are i.i.d. mean zero and unit variance random variables and the $b_{ir}<K$ are finite constants that are non-zero only for the group that contains observation $i$. Also let $\mathcal{H}_0$ and Assumptions \ref{ass:regressors}-\ref{ass:Mhat} hold together with $\nu>5/2$ and $p^3/n =o(1)$. Then
	\begin{equation}
		\mathcal{S}\overset{d}{\rightarrow} N(0,1), \text{ as } n\rightarrow \infty.
	\end{equation}
\end{theorem}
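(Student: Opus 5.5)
The argument has two stages. First, $\mathcal{S}$ is shown to equal, up to $o_p(1)$, a centred and scaled quadratic form $(n d'H^{-1}d-q)/\sqrt{2q}$ in $\epsilon$. Second, that quadratic form is shown to be asymptotically $N(0,1)$ via a martingale or quadratic-form CLT with increasing dimension.

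\textbf{Stage 1: the approximation.} Under $\mathcal{H}_0$ we have $y-X\hat\beta=\epsilon-X(\hat\beta-\beta_0)$. The approximation error terms $r_f,r_{g_j}$ enter only through the approximate null $\mathcal{H}_{0A}$, and are controlled by $\sup|r|=O_p(p^{-\nu})$ with $\nu>5/2$. Substituting the IV estimator $\hat\beta-\beta_0=(\hat N'\hat M^{-1}\hat N)^{-1}\hat N'\hat M^{-1}Z'\epsilon/n$ gives
\[
\hat d=-\tfrac{2}{n}\hat J'\hat M^{-1/2}\hat{\mathcal{K}}_{NM}\hat M^{-1/2}Z'\epsilon ,
\]
where $\hat{\mathcal{K}}_{NM}$ is the sample analogue of $\mathcal{K}_{NM}$.

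The next step replaces $\hat J,\hat M,\hat N$ by $J,M,N$. Assumption \ref{ass:Mhat} together with the fourth moments in Assumptions \ref{ass:regressors}--\ref{ass:eigsandinsts} gives $\mathbb{E}\|\hat M-M\|_F^2=O(m^2/n)$, and similarly for $\hat J$ and $\hat N$. Hence each of these errors is $O_p(p/\sqrt n)$ in spectral norm.

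Since $\|Z'\epsilon/\sqrt n\|=O_p(\sqrt m)$, the replacement error in $\sqrt n\,\hat d$ is $O_p(p\sqrt{p}/\sqrt n)$. This error has to be compared with $\sqrt q$ after the quadratic form is formed. The cross term is $O_p(\sqrt p\cdot p^{3/2}/\sqrt n)$, and divided by $\sqrt{2q}$ it is $O_p(p^{3/2}/\sqrt n)=o(1)$ exactly when $p^3/n\to0$. This is where the rate condition in the theorem is used.

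Next, $\hat H$ must be replaced by $H$. This requires $\|\hat\Phi-\Phi\|=o_p(p^{-1/2})$. The difference $\hat\Phi-\Phi$ has two parts. The first, $n^{-1}Z'(\hat\Sigma-\Sigma_\epsilon)Z$ with $\Sigma_\epsilon$ the block products $\epsilon_i\epsilon_j$, comes from $\hat\beta-\beta_0=O_p(n^{-1/2})$. The second, $n^{-1}Z'(\Sigma_\epsilon-\Sigma)Z$, is a block-diagonal average over clusters with bounded $n_g$; its Frobenius norm is $O_p(m/\sqrt n)$, using the $8+\tau$ moments and the independence of clusters.

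The eigenvalue bounds of Lemma \ref{lemma:Omegaeigs} then give $\|H^{-1}\|$ and $\|\hat H^{-1}\|=O_p(1)$. Using $\|\hat H^{-1}-H^{-1}\|\le\|\hat H^{-1}\|\|\hat H-H\|\|H^{-1}\|$ and $\|\sqrt n d\|^2=O_p(q)$, the resulting error is $O_p(q\,p/\sqrt n)/\sqrt q=o(1)$.

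\textbf{Stage 2: the CLT.} Write $n d'H^{-1}d=\epsilon'A\epsilon$ with
\[
A=\tfrac{4}{n}ZM^{-1/2}\mathcal{K}_{NM}M^{-1/2}JH^{-1}J'M^{-1/2}\mathcal{K}_{NM}M^{-1/2}Z' .
\]
Conditionally on $Z$ (or treating the regressors as exogenous), $\mathbb{E}(\epsilon'A\epsilon)=\mathrm{tr}(A\Sigma)$. By the definition of $H$ this equals $q+o_p(\sqrt q)$ once the sample moments are replaced by population moments as in Stage 1.

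Using $\epsilon=B\eta$ with block-diagonal $B$, the form becomes $\eta'B'AB\eta$ in i.i.d. variables $\eta$. The variance is $2\,\mathrm{tr}((A\Sigma)^2)+\sum_r\kappa_4(B'AB)_{rr}^2$. The first term is $2q(1+o(1))$, because $A\Sigma$ is approximately idempotent of rank $q$. The fourth-cumulant term is $O(\max_r(B'AB)_{rr}\cdot q)=O(q\,p/n)=o(q)$.

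Asymptotic normality then follows from a CLT for quadratic forms in independent variables, for instance de Jong's theorem or a martingale-difference CLT applied to $\sum_r\eta_r\sum_{s<r}2(B'AB)_{rs}\eta_s$. The condition to check is $\|B'AB\|/\sqrt{\mathrm{tr}((B'AB)^2)}\to0$. This holds because $\|A\|=O_p(1)$ while the Frobenius norm is of order $\sqrt q\to\infty$, together with a Lindeberg-type condition that follows from the $8+\tau$ moments.

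\textbf{Main obstacle.} I expect Stage 1 to be the hardest part, in particular getting sharp enough matrix-norm rates, of order $p/\sqrt n$, for the increasing-dimension sample moment matrices under the dependence bound of Assumption \ref{ass:Mhat}. The $\epsilon$-dependent error terms in $\hat\Phi$ are especially delicate. They must be small enough relative to $\sqrt q$, and this is exactly what forces $p^3/n=o(1)$. Some care is also needed because $\Upsilon_{f}(y)$ depends on $y$ and hence on $\epsilon$. I would first try to handle this by working with the instrument-projected quantities, which involve only $Z$, treating $J$ only through its expectation and its concentration.
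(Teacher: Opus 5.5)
Your two-stage architecture matches the paper's. Theorems~\ref{theorem:dhatd}--\ref{theorem:approx} reduce $\mathcal{S}$ to $(nd'H^{-1}d-q)/\sqrt{2q}$ via $\|\hat d-d\|=O_p(p^{3/2}/n)$ and $\|\hat H-H\|=O_p(p/\sqrt n)$. The CLT is then obtained from $\epsilon=B\eta$ through Theorem A.1 of \cite{Guptaetal2025}, whose conditions (spectrum of $\mathscr{G}$, $g_{ij}=O_p(p/n)$, $\sum_j g_{ij}^2=O_p(p/n)$) carry the same content as your de Jong/Lindeberg checks.

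\textbf{The gap.} The step where you use Lemma~\ref{lemma:Omegaeigs} to get $\|H^{-1}\|=O_p(1)$, and then replace $\hat H$ by $H$ by controlling $\hat\Phi-\Phi$, fails. Write $J=(\Xi,N)$. Since $N'M^{-1/2}\mathcal{K}_{NM}=N'M^{-1/2}-N'M^{-1}N(N'M^{-1}N)^{-1}N'M^{-1/2}=0$, the last $k$ entries of $d$ vanish identically. Hence $H=\mathrm{diag}(H_{pp},0_{k\times k})$ is singular, with $H_{pp}=4\tilde\Xi'M^{-1}\Phi M^{-1}\tilde\Xi$, $\tilde\Xi=\Xi-N\Gamma_1$ and $\Gamma_1=(N'M^{-1}N)^{-1}N'M^{-1}\Xi$. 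Meanwhile $\hat H=4\hat J'\hat M^{-1}\hat\Phi\hat M^{-1}\hat J$ contains no $\mathcal{K}_{NM}$. It converges to $H_U=4J'M^{-1}\Phi M^{-1}J$, whose $\beta$-block $4N'M^{-1}\Phi M^{-1}N$ is positive definite. So $\|\hat H-H\|$ stays bounded away from zero however well $\Phi$ is estimated. The paper's proof of Theorem~\ref{theorem:approx} has the same problem: its displayed expansion of $\hat H$ has a zero $\beta$-block, whereas the $\beta$-block of $\hat H$ is $4\hat N'\hat M^{-1}\hat\Phi\hat M^{-1}\hat N$. Following it will therefore not close the gap.

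\textbf{The repair.} Work in the $q$-dimensional block and use $q\to\infty$.
\begin{enumerate}
\item With 2SLS, $X'\mathcal{P}_Z(y-X\hat\beta)=0$. Hence $\hat d=(\hat d_p',0')'$ exactly and $n\hat d'\hat H^{-1}\hat d=n\hat d_p'\hat H^{11}\hat d_p$, where $\hat H^{11}$ is the upper-left $q\times q$ block of $\hat H^{-1}$.
\item Your CLT applies unchanged to $nd_p'H_{pp}^{-1}d_p$, and $\|\hat H^{11}-H_U^{11}\|=O_p(p/\sqrt n)$ costs only $o_p(1)$ by your own rate count.
\item The remaining discrepancy is that $H_U^{11}\neq H_{pp}^{-1}$ in general under clustering. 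Let $G=M^{-1}\Phi M^{-1}$ and $\Gamma_2=(N'GN)^{-1}N'G\Xi$. Then $(H_U^{11})^{-1}=4(\Xi-N\Gamma_2)'G(\Xi-N\Gamma_2)$. Because $N'G(\Xi-N\Gamma_2)=0$, you get $H_{pp}-(H_U^{11})^{-1}=4(\Gamma_2-\Gamma_1)'N'GN(\Gamma_2-\Gamma_1)$. This has rank at most $k$ and bounded norm; it vanishes when $\Phi\propto M$, but not in general.
\item Consequently $H_U^{11}-H_{pp}^{-1}=H_U^{11}\bigl(H_{pp}-(H_U^{11})^{-1}\bigr)H_{pp}^{-1}$ is a bounded matrix of rank at most $k$. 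The term $n\hat d_p'(H_U^{11}-H_{pp}^{-1})\hat d_p$ depends only on $k$ fixed linear combinations of $\sqrt n\hat d_p$, each $O_p(1)$. It is therefore $O_p(1)$ and disappears after division by $\sqrt{2q}$.
\end{enumerate}
It is the divergence of $q$, not consistency of $\hat H$ for $H$, that makes the centring and scaling correct. With this replacement the rest of your argument goes through.
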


\noindent Theorem \ref{theorem:nulldist} provides asymptotic justification for using one-sided, standard normal critical values as observed also by \cite{Hong1995}. The `linear process' representation of the errors is convenient to establish a general result for clustered error structure and has been used in more general settings \citep{Kelejian2007, Conley2023}. Our next theorem relates to the power properties of our test. First consider the global alternative
\begin{equation}
	\mathcal{H}_{1A}: \ \  \mu_i \neq 0, \ \ \ \text{for some} \ i=1,\ldots,q, \text{ and any } \beta,
\end{equation}
where $\mu_i$ denotes the $i$-th element of $\mu$. We introduce the unrestricted quantities
\begin{equation}\label{epsilonU_def}
	\epsilon_{Ui}(\mu, \beta)=  y_i - \mu^\prime \upsilon_{f,g,i}- \beta^\prime x_i,i=1,\ldots,n,  \ \ \ \text{and} \ \ \ \  \tilde{\Phi}_{U}= \tilde{\Phi}_{U}(\mu,  \beta)=n^{-1}{Z^\prime \tilde{\Sigma}_U Z},
\end{equation}
where $\upsilon_{f,g,i}'=\left(\psi_1(w_i'y),\ldots,\psi_p(w_i'y),\psi_1(w_i'c_1),\ldots,\psi_p(w_i'c_1),\ldots,\psi_1(w_i'c_l),\ldots,\psi_p(w_i'c_l)\right)'$,
$\tilde{\Sigma}_{U}=diag\left(\tilde{\Sigma}_{U1},\ldots,\tilde{\Sigma}_{UG}\right)$, and $\tilde{\Sigma}_{Ug}$ has $(i,j)$-th element $\epsilon_{Ui}(\mu, \beta)\epsilon_{Uj}(\mu, \beta)$. Then $\hat{\Phi}= \tilde{\Phi}_U(0_{q\times 1}, \hat{\beta})$. Let $\gamma = (\mu, \beta)\in\Gamma= \Re^q \times  \Re^k$ and introduce:
\begin{assumption}\label{ass:power} \textit{For all sufficiently large $n$ and all $j=1,\ldots,q+k$,}
	\begin{equation}\label{cond_omegaU_1}
		\underset{\gamma \in \Gamma} {\sup} \;\;\eigbig(\tilde{\Phi}_U)  +   \underset{\gamma \in \Gamma} {\sup}\;\; \eigbig\left(\frac{\partial\tilde{\Phi}_U}{\partial \gamma_j}\right) =O_p(1)  ,
	\end{equation}
	and
	\begin{equation}\label{cond_omegaU_2}
		\left\{ \underset{\gamma \in \Gamma} {\inf}\;\;\eigsmall(\tilde{\Phi}_U)\right\}^{-1}+ \left\{ \ \underset{\gamma \in \Gamma} {\inf}\;\;\eigsmall\left(\frac{\partial\tilde{\Phi}_U(\gamma)}{\partial\gamma_j}\right)\right\}^{-1}=O_p(1).
	\end{equation} 
\end{assumption}

%

\noindent This assumption imposes mild regularity under $\mathcal{H}_{1A}$, reminiscent of boundedness and invertibility conditions.  Our power result follows below.


\begin{theorem}\label{theorem:consistency}
	Under Assumptions \ref{ass:errors}-\ref{ass:power}, $\mathcal{H}_{1A}$, $\nu>5/2$, and $p^3/n =o(1)$, $\mathcal{S}$ provides a consistent test.
\end{theorem}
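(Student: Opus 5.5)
To prove Theorem \ref{theorem:consistency}, we will show that under $\mathcal{H}_{1A}$ the quadratic form $n\hat{d}'\hat{H}^{-1}\hat{d}$ grows faster than $\sqrt{q}$, so that $\mathcal{S}\to\infty$ in probability and $\Pr(\mathcal{S}>z_\alpha)\to1$ for any fixed critical value. The argument runs through three bounds: a lower bound on $\|\hat d\|^2$, an upper bound on $\eigbig(\hat H)$, and their combination.

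\emph{Step 1: probability limit of the restricted estimator.} Under $\mathcal{H}_{1A}$ the restricted estimator $\hat\beta$ no longer estimates a true parameter. It converges to a pseudo-true value $\bar\beta$, defined as the minimiser of $\lim \mathbb{E}\,\mathcal{F}_p(\beta,0,y)$. Assumptions \ref{ass:eigsandinsts}--\ref{ass:Mhat} give $\|\hat M-M\|$, $\|\hat J-J\|$ and $\|\hat N-N\|$ all $O_p(\sqrt{(q+k)m/n})$, via the covariance bounds $\xi+\varkappa=O(n)$ and Chebyshev. The condition $p^3/n=o(1)$ makes these negligible relative to the quantities that matter below.

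\emph{Step 2: the gradient is large.} Write $y-X\hat\beta=\Upsilon_{f,g}\mu+X(\beta-\hat\beta)+\epsilon+r$, where $r$ collects the approximation errors. Their contribution to $\hat d$ is $O_p(p^{-\nu})$ per coordinate, hence $o(1)$ in norm because $\nu>5/2$. Since $\hat\beta$ solves the $k$ normal equations for $X$, the $\beta$-block of $\hat d$ is zero. The $\mu$-block is then
\[
-2\hat J_\Upsilon'\hat M^{-1/2}\hat{\mathcal K}\hat M^{-1/2}\hat J_\Upsilon\mu+O_p(n^{-1/2}q^{1/2})+o_p(1),
\]
where $\hat{\mathcal K}$ is the sample analogue of $\mathcal K_{NM}$ and $\hat J_\Upsilon$ is the $\Upsilon_{f,g}$-block of $\hat J$. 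The key claim, and the step I expect to be the main obstacle, is that the matrix $J_\Upsilon'M^{-1/2}\mathcal K_{NM}M^{-1/2}J_\Upsilon$ has smallest eigenvalue bounded away from zero. This is the partitioned-inverse form of $(J'M^{-1}J)^{-1}$ restricted to the $\mu$-block. I would establish the bound using the eigenvalue bounds on $M$ and $L$ in (\ref{ass_6_a})--(\ref{ass_6_b}) together with the instrument-relevance argument underlying Lemma \ref{lemma:Omegaeigs}, which also requires checking that $J$ has full column rank with singular values bounded below. Given this, $\|\hat d\|\ge K^{-1}\|\mu\|$ with probability approaching one, and $\|\mu\|$ is bounded away from zero under a fixed alternative.

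\emph{Step 3: the weight matrix is bounded.} Note that $\hat\Phi=\tilde\Phi_U(0,\hat\beta)$. By (\ref{cond_omegaU_1}), taken uniformly over $\Gamma$, we get $\eigbig(\hat\Phi)=O_p(1)$ regardless of where $\hat\beta$ converges; no mean-value argument is needed, though the derivative bounds allow one if required. Combined with bounded $\hat J$ and with $\hat M^{-1}$ bounded, this gives $\eigbig(\hat H)=O_p(1)$. The lower bound in (\ref{cond_omegaU_2}) ensures that $\hat H$ is invertible with probability approaching one.

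\emph{Conclusion.} From Steps 2 and 3,
\[
n\hat d'\hat H^{-1}\hat d\ge n\|\hat d\|^2/\eigbig(\hat H)\ge K^{-1}n
\]
with probability approaching one. Therefore $\mathcal S\ge (K^{-1}n-q)/\sqrt{2q}$. Because $q\sim p=o(n^{1/3})$, the right-hand side diverges like $n/\sqrt q$, which gives consistency.
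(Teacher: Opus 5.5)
Your proof is correct, and it reaches the conclusion by a more direct route than the paper's. The two arguments start the same way. The paper's mean-value step $\hat d_p=(I_q;\,-\hat D_{12}\hat D_{22}^{-1})\tilde d_U(0,\beta_0)$ is exact here because the objective is quadratic. Combined with the split $\tilde d_U(0,\beta_0)=\tilde d_U(\mu_0,\beta_0)+\tau$, it is equivalent to your normal-equations elimination of $\hat\beta$: both leave an $O_p(\sqrt{p/n})$ noise term plus the Schur-complement signal $-2\hat\Xi'\hat M^{-1/2}\hat{\mathcal K}_{NM}\hat M^{-1/2}\hat\Xi\mu_0$.

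The routes differ in how the weight matrix is handled. The paper expands $\hat H^{-1}=\tilde H_U^{-1}(\mu_0,\beta_0)+T$ by a second mean-value step that uses the derivative conditions of Assumption \ref{ass:power}. It then splits $n\hat d_p'\hat H^{11}\hat d_p$ into six terms, re-uses the null theory to show that the centred noise term is $O_p(1)$, and argues that a $\tau$-term dominates. You use only the uniform bound (\ref{cond_omegaU_1}) at $\gamma=(0,\hat\beta)$ and the inequality $n\hat d'\hat H^{-1}\hat d\ge n\Vert\hat d\Vert^2/\eigbig(\hat H)$.

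Your route therefore avoids the derivative conditions, the null CLT, and the paper's assertions about $T$ (that $\Vert T\Vert=O_p(p)$ with eigenvalues bounded below). It still gives divergence at rate $n/\sqrt q$, the same order as the paper's $\frac{n}{\sqrt{2q}}\tau'\tilde{\mathcal V}\tau$ term. What the paper's term-by-term decomposition buys is a template for local alternatives with $\mu_0\rightarrow0$, where a crude lower bound of your kind would no longer suffice. Your Step 1 (the pseudo-true value) is never used and can be dropped.

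The Schur-complement lower bound you flag as the main obstacle is the same premise the paper relies on when it asserts $\liminf\eigsmall(\hat D)>0$ `under Assumption \ref{ass:eigsandinsts}'. It follows from (\ref{ass_6_b}) when $Z=U$. In the IV case, however, it needs a relevance condition on $J$ that is not among the stated assumptions; Lemma \ref{lemma:Omegaeigs} concerns $\Phi$, not $J$, so it does not supply this. Both proofs likewise take $\Vert\mu_0\Vert$ bounded away from zero for a fixed alternative.
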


\section{Practical guidance for the $s$ test}\label{sec:implementation}

In this section, we discuss practical implementation issues and explain how we construct the recommended $\boldsymbol{s}$ test decision rule introduced in Definition \ref{def:s-test}.
\paragraph{Choice of tuning parameters.}
 Our testing procedure is nonparametric and as such requires the tuning parameter $p$ to be chosen. Given our rate condition $p^3/n\rightarrow 0$, a reasonable empirical choice might be $p=\max\left\{2,\left[n^{1/3}\right]\right\}$, where $[\cdot]$ denotes the closest integer, see e.g. \cite{Gupta2023}, and noting that we always want $p>1$ to distinguish our setting from the linear case when using polynomial bases. However, note that the rate $p^3/n\rightarrow 0$ is equivalent asymptotically to $p^3(l+1)/n\rightarrow 0$ because $l$ is fixed, but in finite samples the extra $l+1$ factor can play a role. Thus our recommendation is to use
\begin{eqnarray}
\text{For the }\boldsymbol{cy}\text{ test, with both } \boldsymbol {c} \text { and } \boldsymbol {y} \text{ spillovers}&:&  p_{cy}=\max\left\{2,\left[\frac{n^{1/3}}{l+1}\right]\right\}, \label{pxy_reco}\\	
\text{For the }\boldsymbol{c}\text{ test, with only } \boldsymbol {c} \text{ spillovers}&:& p_c=\max\left\{2,\left[\frac{n^{1/3}}{l}\right]\right\},\label{px_reco}\\
\text{For the }\boldsymbol{y}\text{ test, with only } \boldsymbol {y} \text{ spillovers}&:& p_y=\max\left\{2,\left[n^{1/3}\right]\right\}. \label{py_reco}
\end{eqnarray}
\noindent In Appendix \ref{sec:app_ext_mult} we also show that we can extend our test to a setting with multiple $W$ matrices, say $\ell$, for which we recommend $p_{cy,m}=\max\left\{2,\left[{n^{1/3}}/{\ell (l+1)}\right]\right\}$ in (\ref{pxym_reco}) therein for the $\boldsymbol{cy}$ test. 

\paragraph{Rank deficiency issues in $Z$.} In finite samples, the instrument matrix $Z$ may be rank deficient. In such cases, we recommend identifying a maximal set of linearly independent columns of $Z$, yielding an $n\times r$ submatrix with full column rank, where $r=rank(Z)$. The test statistics can then be constructed using this reduced instrument matrix. Standard numerical routines are available for extracting a linearly independent subset of columns, with restriction of course that $m\geq q+k$. 

\paragraph{Critical values.} While our asymptotic theory justifies the use of one-sided standard normal critical values for the $\boldsymbol{c}$, $\boldsymbol{y}$ and $\boldsymbol{cy}$ tests, it is reasonable to suspect that these might not be completely reliable for small $(n,p)$. A recommended robustness check is to also compare the test statistic $\mathcal{S}$ to $(\chi^2_{q,\alpha}-q)/\sqrt{2q}$, where $\chi^2_{q,\alpha}$ is the critical value at the $\alpha\%$ significance level for a $\chi^2_q$ distribution.

\paragraph{Discussion of the rejection rule in Definition \ref{def:s-test}.} Our recommended $\boldsymbol{s}$-test rejection rule in Definition \ref{def:s-test} warrants further discussion. Our methodology yields three candidate tests for spillovers: through individual covariates ($\boldsymbol{c}$), through outcomes ($\boldsymbol{y}$), and through both channels jointly ($\boldsymbol{cy}$). One may therefore ask why we do not simply recommend one of these tests, rather than the composite decision rule underlying the $\boldsymbol{s}$ test in Definition \ref{def:s-test}. The reason is that reliance on any single channel can be misleading. To see this, consider the following cases:
\begin{enumerate}
    \item \textit{At least one of the $\boldsymbol{c}$ or $\boldsymbol{y}$ tests reject the null of no spillovers, but the $\boldsymbol{cy}$ test does not.} 
    This case can arise when one out of the covariate or outcome channels exhibits strong spillovers but the other does not, or does so to a much weaker degree. If the strength of the channel through which spillovers occur is sufficiently weak relative to the no-spillover channel, the combined $\boldsymbol{cy}$ test statistic can become too small and mask the spillover. Degrees of freedom issues can also play a role; indeed, the $\boldsymbol{cy}$ test has $q=p(l+1)$ while the other tests have $q=p$ or $q=pl$. This can distort the test procedure in finite samples. Our $\boldsymbol{s}$ test rejection rule avoids these problems because it will detect spillovers via the $\boldsymbol{c}$ or $\boldsymbol{y}$ tests. \vspace{0.2cm}
    \item \textit{Neither the $\boldsymbol{c}$ or $\boldsymbol{y}$ tests reject the null of no spillovers, but the $\boldsymbol{cy}$ test does.} This can be caused by a high degree of collinearity between the series approximations of $f(s)$ and the $g_j(s)$, namely the columns of the matrix $\Upsilon_{f,g}$. Our theory rules out perfect multicollinearity, nevertheless a high degree of collinearity can make the $\boldsymbol{c}$ and $\boldsymbol{y}$ test statistics small and therefore unreliable on their own as a decision tool. Our $\boldsymbol{s}$ test rejection rule avoids this problem of high collinearity masking the spillover because it will detect the spillover via the $\boldsymbol{cy}$ test.
\end{enumerate}

\allowdisplaybreaks
\section{Four demonstrations}\label{illustrations}

In this section, we illustrate the scope of our nonparametric test by revisiting four existing studies that investigate peer effects across different contexts—high-skill professionals, interracial college roommates, elementary-school deskmates, and academic researchers—with mixed findings. We use Hermite polynomials as basis functions in all four examples. Our test detects spillovers in many cases where linear specifications fail to do so, thereby overturning several, but not all, of the original findings of no spillovers. These applications highlight the repercussions of ignoring nonlinearities in the spillover mechanism. Additional details on the original studies are reported in Appendix \ref{app_C}. 



\subsection{Professional golf tournaments \citep{GuryanKroftNotowidigdo}} \label{golf}\label{applications}


Our first example builds on \cite{GuryanKroftNotowidigdo}, who study whether peer effects influence individual productivity in high-skill professional environments in the context of professional golf tournaments. They exploit a natural experiment within golf tournaments where playing partners are randomly assigned within predefined block–round categories. 

Their original results are reproduced in Appendix \ref{app_C}, Table \ref{tab:GKNaugmented}, and include three specifications.\footnote{The large sample size and relatively few covariates imply large values for $p$ using (\ref{pxy_reco})-(\ref{py_reco}). This can cause rank deficiency issues in $Z$, practical guidance for which was provided in Section \ref{sec:implementation}.}  In specification (i), players' performance is modelled as a function of their own ability, measured by the corrected handicap score, and the average ability of their peers in the same block–round–tournament.\footnote{
For details about the way the corrected handicap score is calculated, see Appendix \ref{app_C}.} Specification (ii)  incorporates alternative measures of peer ability—average driving distance, number of putts, and number of greens hit—designed to distinguish motivation effects (e.g., higher effort induced by stronger partners) from learning effects (e.g. adapting to observed putting strategies). Specification (iii) introduces heterogeneity by interacting partners’ average ability with a player’s own baseline ability and years of professional experience. 

While previous studies have found significant positive peer effects in low-skill labour markets \citep{BandieraBarankayRasul,MasMoretti}, \cite{GuryanKroftNotowidigdo} find limited evidence of peer effects in individual performance. In particular, they conclude against peer effects in specifications (i) and (ii), while specification (iii) provides some support for heterogeneous peer effects via the experience channel.


\subsubsection*{Test results}


Table \ref{tab:GKNjointtest} summarizes the authors' main results and the results from our tests.
The three rows correspond, in order of appearance, to the three original specifications reported in Appendix \ref{app_C}, Table \ref{tab:GKNaugmented}.
Columns 1--2 report the specification number and the construction of the attribute peer exposure variables $w'c$, respectively.
The `Original result' column reports the conclusions reached in the original paper regarding the null hypothesis of no linear peer effects, while the `$\boldsymbol{s}$ test result' column reports the conclusion of our test.
The `Rejection channel' column indicates the channels through which we reject the null: peers' attributes ($\boldsymbol{c}$), peers' outcomes ($\boldsymbol{y}$), or both ($\boldsymbol{cy}$).
Columns `$n$' and `$l$' report the sample size and the number of peer attribute terms used in the test, respectively.
Finally, the column `$p_c$, $p_y$, and $p_{cy}$' reports the number of basis functions used to test each spillover channel.

\begin{table}[p]\centering
\caption{Peer Exposure Design and Test Results: \citet{GuryanKroftNotowidigdo}}
\label{tab:GKNjointtest}
\small
\renewcommand{\arraystretch}{1.25}
\resizebox{\textwidth}{!}{%
\begin{tabular}{
c
p{3cm}
c
c
c
c
c
c
}
\toprule
\multicolumn{8}{l}{\textbf{Dependent variable:} Score in a given tournament-round, $y_{i,tr}$ } \\
\multicolumn{8}{l}{\textbf{Peer exposure in outcomes:} $w_{i,tr}'y_{tr}$} \\
\addlinespace
\midrule

Spec. & Peer exposure $w'\boldsymbol{c}$
& Original result
& $\boldsymbol{s}$ test result
& Rejection channel
& $n$
& $l$
& $p_c$, $p_{y}$, $p_{cy}$ \\
\midrule

(i)
& $w_{i,tr}'Ability$
& Do not reject
& Reject
& $\boldsymbol{c}$, $\boldsymbol{y}$, $\boldsymbol{cy}$
& 17,492
& 1
& 26, 26, 13\\ [2 em]

(ii)
& $\begin{aligned}[t]
& w_{i,tr}'DrivDist, \\
& w_{i,tr}'Greens, \\
& w_{i,tr}'Putts \\
& \end{aligned}$
& Do not reject
& Reject
& $\boldsymbol{c}$, $\boldsymbol{y}$, $\boldsymbol{cy}$
& 17,182
& 3
& 9, 26, 6 \\ [5em]

(iii)
& $\begin{aligned}[t]
& w_{i,tr}'Ability, \\
&  Ability_i \times w_{i,tr}' Ability,\\
& Exp_i \times w_{i,tr}' Ability\\
& \end{aligned}$
& Reject
& Reject
& $\boldsymbol{c}$, $\boldsymbol{y}$, $\boldsymbol{cy}$
& 17,492
& 3
& 9, 9, 4 \\
\bottomrule
\end{tabular}
}

\footnotesize
\justifying
\textit{Note:} 
Each row corresponds to a specification in Table \ref{tab:GKNaugmented} in Appendix \ref{app:gkn}. 
Columns 1-2 report the specification number and the construction of attribute peer exposure variables. The `Original result' column reports the conclusions reached in the original paper for the corresponding null hypothesis of no linear peer effects. The $\boldsymbol{s}$ test examines dependence operating through three rejection channels: the $\boldsymbol{c}$ channel through peers’ attributes, the $\boldsymbol{y}$ channel through peers’ outcomes, and the $\boldsymbol{cy}$ channel through both peers’ attributes and outcomes. Column $n$ reports the sample size and column $l$ the number of peer attribute terms. The column `$p_c$, $p_y$, $p_{cy}$' reports the selected number of basis functions used in each test. The outcome variable is the golf score for the round. $Ability_i$ is measured by the player's average handicap, $DrivDist_i$ is the average driving distance, $Greens_i$ is the average number of greens hit in regulation, and $Putts_i$ is the average number of putts per round, all averaged over the previous 2-3 years. $Exp_i$ is measured as years of experience. Peer exposure is measured using a social vector $w_{i,tr}$, where each player is exposed to the weighted average of the attributes/outcomes of peers within the same group-tournament-round. Sample weights are given by the inverse of the sample variance of the estimated ability of each player, in line with the original study. All specification controls are identical to those reported in Table \ref{tab:GKNaugmented} in Appendix \ref{app:gkn}. Tests are performed at the 95\% confidence level. Standard errors are clustered at the playing group level.
\end{table}

Our findings suggest the presence of peer effects in professional golf that may not be fully captured by the linear specifications in \cite{GuryanKroftNotowidigdo}. Indeed, our $\boldsymbol{s}$ tests detect spillovers in all three specifications, pointing to nonlinear effects that the original specifications fail to detect in cases (i) and (ii).

\subsection{Interracial contact, stereotypes, and academic performance \citep{CornoLaFerraraBurns}}\label{roommates}

In our second demonstration we revisit \cite{CornoLaFerraraBurns}, who exploit a policy implemented by the University of Cape Town to study whether interracial interaction affects stereotypes, attitudes, and academic performance in post-apartheid South Africa. The policy randomly allocates first-year students to roommates, providing exogenous variation in whether a student shares a double room with a roommate of a different race.

We focus on the analysis of whether mixed-race interaction affects academic performance.  The authors estimate the effect of a mixed room intervention on four academic outcomes --- GPA, the number of exams passed, eligibility to continue to the second year, and a composite index of academic performance --- separately for the White subsample, the Black subsample, and the full sample. Their original results are reproduced in Appendix \ref{app_C}, Table \ref{tab:CLBregress}. They find that Black students assigned to mixed-race rooms experience significant improvements in academic performance, whereas the estimated effects for White students are close to zero and statistically insignificant. In the full sample, the effects are positive and significant for all outcomes except GPA.


\subsubsection*{Test results}

Table \ref{tab:CLBtest} reports our $\boldsymbol{s}$ test results. Rows are ordered by panels A (Whites), B (Blacks) and C (full sample), and, within each panel, by specification (i)--(iv), matching the layout of Table \ref{tab:CLBregress} in Appendix \ref{app:clb}. Our nonparametric test reveals additional spillover patterns that complement the authors' original findings. 
The $\boldsymbol{s}$ test agrees with the authors in nine of the twelve cases, including two cases in Panel A where it preserves the original conclusion of `Do not reject'.  On the other hand, in three cases, the original linear regressions detect no significant peer effects, whereas our $\boldsymbol{s}$ test does.

\begin{table}[p]\centering
\caption{Peer Exposure Design and Test Results: \citet{CornoLaFerraraBurns}}
\label{tab:CLBtest}
\small
\renewcommand{\arraystretch}{1.25}
\resizebox{\textwidth}{!}{%
\begin{tabular}{
p{.9cm}
p{3.4cm}
c
c
c
c
c
c
}
\toprule
\multicolumn{8}{l}{\textbf{Peer exposure in attributes:} $w_{i}'Race$} \\
\multicolumn{8}{l}{\textbf{Peer exposure in outcomes:} $w_{i,\text{race}}'y$} \\
\multicolumn{8}{l}{Panels: A = Whites, B = Blacks, C = Full sample} \\
\multicolumn{8}{l}{$l=1$} \\
\addlinespace
\midrule

Spec.
& $y$
& Panel
& Original result
& $\boldsymbol{s}$ test result
& Rejection channel
& $n$
& $p_c$, $p_{y}$, $p_{cy}$
\\
\midrule

(i)
& GPA
& A
& Do not reject
& Do not reject
& --
& 117
& 5, 5, 2
 \\ [.5em]

(ii)
& No. of exams passed
& A
& Do not reject
& Reject
& $\boldsymbol{y}$
& 117
& 5, 5, 2 \\ [.5em]

(iii)
& Eligible to continue
& A
& Do not reject
& Reject
& $\boldsymbol{c}$
& 117
& 5, 5, 2 \\ [.5em]

(iv)
& Academic perf. index
& A
& Do not reject
& Do not reject
& --
& 117
& 5, 5, 2 \\
\addlinespace

(i)
& GPA
& B
& Reject
& Reject
& $\boldsymbol{c}$, $\boldsymbol{cy}$
& 332
& 7, 7, 3 \\ [.5em]

(ii)
& No. of exams passed
& B
& Reject
& Reject
& $\boldsymbol{c}$, $\boldsymbol{y}$, $\boldsymbol{cy}$
& 332
& 7, 7, 3
 \\ [.5em]

(iii)
& Eligible to continue
& B
& Reject
& Reject
& $\boldsymbol{c}$, $\boldsymbol{cy}$
& 332
& 7, 7, 3\\ [.5em]

(iv)
& Academic perf. index
& B
& Reject
& Reject
& $\boldsymbol{c}$, $\boldsymbol{cy}$
& 332
&  7, 7, 3 \\
\addlinespace

(i)
& GPA
& C
& Do not reject
& Reject
& $\boldsymbol{cy}$
& 499
& 8, 8, 4
 \\ [.5em]

(ii)
& No. of exams passed
& C
& Reject
& Reject
& $\boldsymbol{y}$
& 499
& 8, 8, 4 \\ [.5em]

(iii)
& Eligible to continue
& C
& Reject
& Reject
& $\boldsymbol{y}$
& 498
& 8, 8, 4 \\ [.5em]

(iv)
& Academic perf. index
& C
& Reject
& Reject
& $\boldsymbol{y}$, $\boldsymbol{cy}$
& 498
& 8, 8, 4\\

\bottomrule
\end{tabular}
}

\justifying
\footnotesize

\textit{Note:} Each row corresponds to one panel $\times$ specification cell of Table \ref{tab:CLBregress} in Appendix \ref{app:clb}. Panels A--C are the White, Black, and full samples, respectively; specifications (i)--(iv) use GPA, number of exams passed, eligibility to continue, and the academic performance index as the dependent variable. The `Original result' column reports the conclusions reached in the original paper for the corresponding null hypothesis of no linear peer effects. The $\boldsymbol{s}$ test examines dependence operating through three rejection channels: the $\boldsymbol{c}$ channel through peers’ attributes, the $\boldsymbol{y}$ channel through peers’ outcomes, and the $\boldsymbol{cy}$ channel through both peers’ attributes and outcomes. Column $n$ reports the sample size.
The column `$p_c$, $p_y$, $p_{cy}$' reports the selected number of basis functions used in each test. The attribute variable $Race$ is the race indicator of each student, so that $w_i'Race$ is the race of $i$'s baseline roommate (see Footnote \ref{footnote:w_corno}).
All specification controls are identical to those reported in Table \ref{tab:CLBregress}. Tests are performed at the 95\% confidence level. Standard errors are clustered at the room level. 

\end{table}

\subsection{Ability mix and student performance \citep{wuzhangwang2023}}\label{wzw}

In our third demonstration we revisit \cite{wuzhangwang2023}, who study a randomized deskmate intervention in elementary schools in China. Exploiting the fixed-seat system, under which students sit beside the same deskmate throughout the semester, the authors randomly pair previously high- and low-achieving students and examine effects on academic performance and `big five' personality traits.

The design features two treatment arms. In mixed-seating (MS) classes, students above and below the median of a prior examination are randomly paired as deskmates for 20 weeks. In mixed-seating with reward (MSR) classes, the same pairing is combined with a tournament-type incentive for high-achieving students: they receive a monetary award and a certificate of commendation if their deskmate's score improvement ranks in the top 10\% among lower-track students in the class. Control classes receive random seat assignment. The authors find that MS alone does not raise test scores relative to control, whereas MSR raises low-achieving students' mathematics scores by about 0.24 standard deviations; high-achieving students are largely unaffected academically. The MSR intervention also increases extraversion and agreeableness for both tracks.

We focus on their deskmate-level peer-effect specifications, which regress endline outcomes on the deskmate's baseline achievement. These results are reproduced in Appendix \ref{app:wzw}, Table \ref{tab:peer_effects_mixed_seating}. Statistically significant linear effects of deskmate baseline performance are largely absent in MS classes and remain limited under MSR---appearing mainly for selected personality traits among lower-track students---with little evidence that deskmate baseline scores raise endline academic performance measured by z-scores.

\subsubsection*{Test results}

In Table \ref{tab:WZWtest} we apply our $\boldsymbol{s}$ test to the deskmate peer structure. Following the authors' setup we define peer attribute exposure as the deskmate's baseline measurement of the outcome of interest, $w_i'\text{y}^{base}$, and peer outcome exposure as the deskmate's endline outcome, $w_i'y^{end}$. Rows are ordered by panels A (MS/low), B (MS/high), C (MSR/low) and D (MSR/high) and, within each panel, by specification (i)--(vi), matching Table \ref{tab:peer_effects_mixed_seating} in Appendix \ref{app:wzw}. The remaining columns follow previous conventions. Our $\boldsymbol{s}$ test rejects the null in most cases where the original linear specifications do not. In Panel~A, the $\boldsymbol{s}$ test rejects for every specification, whereas the original regressions do not reject at the 5\% level. In Panel~B, we fail to reject only for academic z-score and Agreeableness, matching the authors on those two cases but rejecting for the remaining traits. In Panel~C, we do not reject for academic z-score and Neuroticism but reject for Agreeableness, as in the original study; we additionally reject for Extraversion, Openness, and Conscientiousness. In Panel~D, we reject throughout, while the original regressions do not reject at 5\%. Overall, the results capture nonlinear deskmate dependence for both reward and no reward settings.

\begin{table}[p]\centering
\caption{Peer Exposure Design and Test Results: \citet{wuzhangwang2023}}
\label{tab:WZWtest}
\footnotesize
\renewcommand{\arraystretch}{1.2}
\resizebox{\textwidth}{!}{%
\begin{tabular}{
c c c p{2.2cm} c c c
c     
c      
c             
p{2.2cm}      
c             
c             
}
\toprule
\multicolumn{7}{l}{\textbf{Peer exposure in attributes:} $w_{i}'y^{base}$ }\\
\multicolumn{7}{l}{\textbf{Peer exposure in outcomes:} $w_{i}'y^{end}$} \\
\multicolumn{7}{l}{Panels: A = MS/low, B = MS/high, C = MSR/low, D = MSR/high} \\
\multicolumn{7}{l}{$l=1$,  \hspace{0.5cm} $p_c$, $p_y$, $p_{cy} =$ 7, 7, 3} \\
\addlinespace
\midrule

Spec.
& $y$ 
& Panel 
& Original result
& $\boldsymbol{s}$ test result 
& Rejection channel
& $n$ 
 \\
\midrule

(i) 
& Z-score
& A 
& Do not reject 
& Reject  
& $\boldsymbol{c}$, $\boldsymbol{cy}$
& 317 
\\

(ii) 
& Extraversion  
& A 
&  Do not reject 
& Reject 
& $\boldsymbol{c}$
& 317 
 \\

(iii) 
& Agreeableness  
& A
&  Do not reject
& Reject 
& $\boldsymbol{cy}$
& 317 
  \\

(iv) 
& Openness 
& A 
& Do not reject 
& Reject 
& $\boldsymbol{c}$
& 317 
  \\

(v) 
& Neuroticism 
& A 
&  Do not reject 
& Reject 
& $\boldsymbol{c}$, $\boldsymbol{y}$
& 317 
  \\

(vi) 
& Conscient. 
& A 
&  Do not reject 
& Reject 
&  $\boldsymbol{c}$
& 317 
  \\
\addlinespace

(i) 
& $\text{z-score}_{end}$ 
& B 
& Do not reject 
& Do not reject 
& --
& 317 
 \\

(ii) 
& Extraversion  
& B 
& Do not reject 
& Reject 
& $\boldsymbol{c}$
& 317 
  \\

(iii)  
& Agreeableness  
& B 
&  Do not reject 
&  Do not reject 
& --
& 317 
  \\

(iv) 
& Openness 
& B 
&  Do not reject 
& Reject 
& $\boldsymbol{c}$, $\boldsymbol{y}$
& 317 
  \\

(v) 
& Neuroticism 
& B 
&  Do not reject 
& Reject 
& $\boldsymbol{c}$, $\boldsymbol{y}$, $\boldsymbol{cy}$
& 317 
  \\

(vi) 
& Conscient. 
& B
&  Do not reject 
& Reject 
&  $\boldsymbol{c}$
& 317 
  \\
\addlinespace

(i) 
& $\text{z-score}_{end}$ 
& C 
& Do not reject 
& Do not reject 
& --
& 297 
 \\

(ii) 
& Extraversion 
& C 
& Do not reject 
& Reject
& $\boldsymbol{c}$, $\boldsymbol{y}$, $\boldsymbol{cy}$
& 297 
  \\

(iii) 
& Agreeableness  
& C 
& Reject   
& Reject 
& $\boldsymbol{c}$, $\boldsymbol{cy}$
& 297      
  \\

(iv) 
& Openness 
& C 
& Do not reject 
& Reject
& $\boldsymbol{c}$, $\boldsymbol{cy}$
& 297 
  \\

(v) 
& Neuroticism 
& C 
&  Do not reject
&  Do not reject
& --
& 297 
  \\

(vi) 
& Conscient. 
& C 
&  Do not reject 
& Reject
&  $\boldsymbol{y}$, $\boldsymbol{cy}$
& 297 
  \\
\addlinespace

(i) 
& $\text{z-score}_{end}$ 
& D 
& Do not reject 
& Reject      
& $\boldsymbol{c}$
& 297 
  \\

(ii) 
& Extraversion  
& D 
&  Do not reject 
&  Reject 
& $\boldsymbol{y}$, $\boldsymbol{cy}$
& 297 
  \\

(iii) 
& Agreeableness  
& D 
& Do not reject 
& Reject 
& $\boldsymbol{cy}$
& 297 
  \\

(iv) 
& Openness 
& D 
&  Do not reject 
& Reject 
& $\boldsymbol{c}$, $\boldsymbol{y}$
& 297 
  \\

(v) 
& Neuroticism 
& D 
&  Do not reject 
& Reject 
&  $\boldsymbol{y}$
& 297 
  \\

(vi) 
& Conscient. 
& D 
&  Do not reject
& Reject 
&  $\boldsymbol{y}$
& 297 
  \\

\bottomrule
\end{tabular}
}

\justifying
\footnotesize
\textit{Note:} Each row corresponds to one panel $\times$ specification cell of Table \ref{tab:peer_effects_mixed_seating} in Appendix \ref{app:wzw}. Panels A--D are composed of lower- or upper-track students in the MS and MSR classes; specifications (i)--(vi) use academic z-score and the five personality traits at endline as the outcome of interest respectively. The `Original result' column reports the conclusions reached in the original paper for the corresponding null hypothesis of no linear peer effects. The $\boldsymbol{s}$ test examines dependence operating through three rejection channels: the $\boldsymbol{c}$ channel through peers’ attributes, the $\boldsymbol{y}$ channel through peers’ outcomes, and the $\boldsymbol{cy}$ channel through both peers’ attributes and outcomes. Column $n$ reports the sample size.
Tests are performed at the 95\% confidence level. Standard errors are clustered at the class level.
\end{table}

\subsection{Peer effects in academic research \citep{bosquetcombesetal2022}}\label{bch}

Our fourth demonstration focuses on \cite{bosquetcombesetal2022}, who study peer effects in academic research among economists in French universities. They exploit a national contest (\textit{concours d'agr\'egation}) that centrally allocates successful candidates across universities, generating quasi-exogenous variation in colleagues' field-specific productivity. Identification is sharpened by focusing on arrivals among the lowest-ranked successful candidates, whose remaining choice sets are tightly constrained, so that the field specialization of the arriving professor can be treated as approximately as good as random for the receiving university.

The authors find no peer effects when peers are defined as the entire university. Restricting attention to colleagues in the same JEL field 
 and year, however, they find that one additional publication by field peers raises own productivity by a substantial range of about  0.6 publications on average. Accounting for heterogeneity in peer interactions, they further show that spillovers are weaker for women and older researchers. Their main results are reproduced in Appendix \ref{app_C}, Table \ref{tab:peer_effects_ols}.

\subsubsection*{Test results}

Table \ref{tab:BCHtest} reports our nonparametric test for the specifications in Table \ref{tab:peer_effects_ols}. 
The attribute terms in $w'\boldsymbol{c}$ are the number of peers and the average peer outcome in a given university and year (restricted to the same JEL code from specification~(iii) onwards). The richer specifications~(iv)--(vi) include their interactions  with gender and/or age. 

Our $\boldsymbol{s}$ test rejects the null in every specification. This aligns with the authors for the specifications with field-level peer effects (iii)--(vi),  but differs for  specifications (i)--(ii), where the original regressions do not reject. 

\begin{table}[p]\centering
\caption{Peer Exposure Design and Test Results: \cite{bosquetcombesetal2022} }
\label{tab:BCHtest}
\small
\renewcommand{\arraystretch}{1.25}
\resizebox{\textwidth}{!}{%
\begin{tabular}{
c      
p{5.2cm}      
c             
c             
p{1.6cm}      
c             
c             
c             
}
\toprule
\multicolumn{8}{l}{\textbf{Dependent variable:} Individual research output } \\

\addlinespace
\midrule
Spec. 
& Peer exposure $w'\boldsymbol{c}$ 
& Original result
& $\boldsymbol{s}$ test result
& Rejection channel
& $n$ 
& $l$ 
& $p_c$, $p_y$, $p_{cy}$ 
\\

\midrule
(i)  
& $\begin{aligned}[t]
& w_{i,ut}'Peer, \\
& w_{i,ut}'Output
\end{aligned}$
& Do not reject
& Reject 
& $\boldsymbol{c}$, $\boldsymbol{y}$
& 42{,}861 
& 2 
& 17, 17, 11 \\ [3em]

(ii)   
& $\begin{aligned}[t]
& w_{i,ut}'Peer, \\
& w_{i,ut}'Output
\end{aligned}$
& Do not reject
& Reject 
& $\boldsymbol{c}$, $\boldsymbol{y}$,  $\boldsymbol{cy}$
& 771{,}498 
& 2 
& 46, 46, 31 \\ [3em]

(iii)   
& $\begin{aligned}[t]
& w_{i,ut}'Peer, \\
& w_{i,uft}'Output
\end{aligned}$
& Reject
& Reject
& $\boldsymbol{cy}$
& 771{,}498 
& 2 
& 31, 31, 23 \\ [3em]

(iv)  
& $\begin{aligned}[t]
& w_{i,ut}'Peer, \\
& w_{i,uft}'Output \\
& Woman_i \times w_{i,uft}'Output
\end{aligned}$
& Reject
& Reject
& $\boldsymbol{c}$, $\boldsymbol{y}$
& 771{,}498 
& 3 
& 31, 31, 23 \\ [4.5em]

(v) 
& $\begin{aligned}[t]
& w_{i,ut}'Peer, \\
& w_{i,uft}'Output \\
& Age_i \times w_{i,uft}'Output
\end{aligned}$
& Reject
& Reject
& $\boldsymbol{c}$
& 771{,}498 
& 3 
& 31, 31, 23 \\ [4.5em]

(vi)  
& $\begin{aligned}[t]
& w_{i,ut}'Peer, \\
& w_{i,uft}'Output \\
& Woman_i \times w_{i,uft}'Output, \\
& Age_i \times w_{i,uft}'Output
\end{aligned}$
& Reject
& Reject
& $\boldsymbol{c}$, $\boldsymbol{y}$
& 771{,}498 
& 4 
& 23, 23, 19 \\

\bottomrule
\end{tabular}
}
\justifying
\footnotesize
\textit{Note:} Each row corresponds to a specification in Table \ref{tab:peer_effects_ols} in Appendix \ref{app_C}. In Specification (i) the individual output is aggregated over all JEL
codes 
($n=42{,}861$), while in
specifications (ii)--(vi) individual output is defined at the JEL level ($n=771{,}498$). Columns 1--2 report the specification number and the construction of attribute peer exposure variables. The `Original result' column reports the conclusions reached in the original paper for the corresponding null hypothesis of no linear peer effects. The $\boldsymbol{s}$ test examines dependence operating through three rejection channels: the $\boldsymbol{c}$ channel through peers' attributes, the $\boldsymbol{y}$ channel through peers' outcomes, and the $\boldsymbol{cy}$ channel through both peers' attributes and outcomes. Column $n$ reports the sample size and column $l$ the number of peer attribute terms. The column `$p_c$, $p_y$, $p_{cy}$' reports the selected number of basis functions used in each test. Individual research output $y$ is the three-year moving average of articles divided by the number of co-authors, aggregated over fields in specification~(i) and measured at the JEL-code level thereafter. Peer output $w_{i,ut}'Output$ is the average of colleagues present in the department at date~$t$, where each colleague's productivity is their career-average publications per year; 
 from specification~(iii), this average is restricted to the same JEL code, $w_{i,uft}'Output$. 
Tests are performed at the 95\% confidence level. Standard errors are clustered at the department level.
\end{table}

\section{Conclusion}\label{Conclusions} 
Cross-unit dependence has long been recognized as a central concern in  economics, reflecting both fundamental identification problems and first-order implications for econometric inference \citep{manski1993reflection,Conley1999}. 
This paper proposes a novel nonparametric test for spillovers operating through peers’ attributes and/or outcomes, and provides a full asymptotic theory for it. The test has several appealing features. First, it can pick up nonlinear interactions of unknown form. Second, it only requires estimation under the null hypothesis of no spillovers, thereby avoiding nonparametric estimation altogether. Third, it is versatile, accommodating a wide range of data structures, including settings in which the interaction structure is incomplete or measured with error.

Our approach complements existing methods by offering a simple diagnostic to assess whether cross-unit dependence is present and whether linear approximations are likely to be informative. We illustrate its usefulness through four empirical applications, which suggest that the test can uncover forms of cross-unit dependence that are missed by standard specifications.
More broadly, our results reinforce the idea that the form and extent of spillovers should be informed by empirical evidence whenever possible.

\newpage

\bibliographystyle{chicago}
\bibliography{references}

\appendix
\renewcommand{\thesection}{A}
	\setcounter{equation}{0} \renewcommand{\theequation}{A.%
		\arabic{equation}} 
        \setcounter{table}{0} \renewcommand{\thetable}{A%
		\arabic{table}} 
	\setcounter{theorem}{0} \renewcommand{\thetheorem}{A\arabic{theorem}}

\section{Simulation study}\label{app:sims}

This section reports Monte Carlo results on the empirical size of our test. We set $k = 3$ and $\beta = (0.5, -2, 1)'$. The regressor matrix $X$ is $n\times k$ with ones in the first column and, in each replication, the remaining columns drawn independently across units from $U[-2,2]$ and $U[-2.5,2.5]$. Outcomes are generated by $y_i = x_i'\beta + \epsilon_i$, with no outcome or covariate spillovers. We implement the $\boldsymbol{cy}$ test using Hermite polynomials as the basis functions $\psi_j(\cdot)$ and (\ref{pxy_reco}) as the selection rule for $p_{cy}$, with $l=2$.

For the cluster-correlation specification paired with cluster-robust standard errors, stack $\epsilon = (\epsilon_{(1)}',\ldots,\epsilon_{(G)}')'$, where $\epsilon_{(g)}\in\mathbb{R}^{n_g}$. Then 
\begin{equation}
\label{eq:Var-epsilon-cluster}
var(\epsilon)
  = \Sigma
  = diag\left[ \Sigma_1,\ldots,\Sigma_G\right],
  \qquad
\Sigma_g =\frac{1}{2}I_{n_g}+\frac{1}{2} \mathbf{1}_{n_g}\mathbf{1}_{n_g}',
\end{equation}
where $\mathbf{1}_{n_g}$ is the $n_g\times 1$ vector of ones.
Hence $var(\epsilon_i)=1$, $cov(\epsilon_i,\epsilon_j)=1/2$ for distinct units in the same cluster, and $cov(\epsilon_i,\epsilon_j)=0$ across clusters. Draws satisfy $\epsilon = L \zeta $ with $L L' = \Sigma$ and $\zeta \in\mathbb{R}^n$ having independent $N(0,1)$ or $t_{10}$ components in the same cluster.\footnote{The $t_{10}$ draws are standardized to unit variance (i.e.\ multiplied by $\sqrt{(10-2)/10}$), so that $var(\zeta_i)=1$ and hence $var(\epsilon)=L L'=\Sigma$ holds.} components in the same cluster. Clusters partition indices into contiguous blocks: the first $G-1$ blocks have size $\lfloor n/G\rfloor$, and the $G$th block contains all remaining units.

We report empirical rejection frequencies under $H_0$ at nominal level $5\%$, with $1000$ replications, for $n =100,200,400,700,1000$ in non-lattice designs and $n =100,210,400,702,992$ on a two-dimensional lattice (see below for details). We report results with $\chi^2_q$ critical values as well as the asymptotic one based on $N(0,1)$. \cite{Guptaetal2024} observe that the former can control size better in smaller samples. Following \cite{Guptaetal2024}, we consider five designs for $W$, all normalized by spectral norm. 

\begin{enumerate}[label=(\arabic*)]
\item \textit{Exponential distance.}
\[
w_{ij} = \exp\left(-|s_i - s_j|\right)\,\mathbbm{1}\{|s_i - s_j| < \log n\},
\]
where $s_i$ is the location of unit $i$ on $[0,n]$, and $s_i \sim$i.i.d.\ $U[0,n]$.
\item \textit{Cutoff.}
\[
w_{ij} = \Phi(-d_{ij})\,\mathbbm{1}\{c_{ij} < n^{-2/3}\},
\]
with $\Phi$ the standard normal c.d.f., $d_{ij}\sim U[-3,3]$, $c_{ij}\sim U[0,1]$ i.i.d..
\item \textit{Circulant.}
$W_{i,i-1} = W_{i,i+1} = 1/2$, $i=1,\ldots,n$.
\item \textit{Random.}
$W$ is symmetric with entries in $\{0,1\}$ and a total of $\lfloor 2 n^{6/5}\rfloor$ non-zero elements off the diagonal. The average number of neighbors per row lies between $5.9$ and $9.2$ over the values of $n$ we use.
\item \textit{Lattice.}
Take $(m_1,m_2)$ with $n=m_1m_2$, e.g.\ $(10,10),(14,15),(20,20),(26,27),(31,32)$, yielding $n=100,210,400,702$ and $992$. Then generate 


\[
w_{m_2(k-1)+j,\, m_2(k-1)+j+1}
=
w_{m_2(k-1)+j,\, m_2(k-1)+j-1}
= 1
\]
for admissible $(k,j)$, and all other entries equal to zero.
\end{enumerate}

\begin{table}[h]
\centering
\caption{Monte Carlo size }\label{tab:sizesim}
\small
\renewcommand{\arraystretch}{1.25}
\begin{tabular}{ccccccccccccc}
\hline
& &  & \multicolumn{5}{c}{$\chi^2_q$ critical values} & \multicolumn{5}{c}{$N(0,1)$ critical value} \\ [.5 em]
$\zeta$ & $n$ & $p$ 
& expo & cutoff & circ & rand & latt
& expo & cutoff & circ & rand & latt \\ [.5 em]
\cmidrule(lr){1-3}\cmidrule(lr){4-8} \cmidrule(lr){9-13} 

$N(0,1)$ & 100  & 2 & 0.036 & 0.037 & 0.053 & 0.044 & 0.052 
          & 0.058 & 0.061 & 0.085 & 0.066 & 0.074 \\
& 200  & 3 & 0.047 & 0.037 & 0.084 & 0.042 & 0.050 
          & 0.071 & 0.059 & 0.109 & 0.065 & 0.081 \\
& 400  & 4 & 0.031 & 0.024 & 0.062 & 0.030 & 0.046 
          & 0.051 & 0.037 & 0.089 & 0.047 & 0.066 \\
& 700  & 4 & 0.039 & 0.037 & 0.059 & 0.047 & 0.047 
          & 0.057 & 0.053 & 0.084 & 0.072 & 0.066 \\
& 1000 & 5 & 0.030 & 0.038 & 0.052 & 0.022 & 0.049 
          & 0.045 & 0.052 & 0.076 & 0.042 & 0.071 \\ [1 em]

$t_{10}$ & 100  & 2 & 0.040 & 0.049 & 0.049 & 0.042 & 0.046
          & 0.062 & 0.078 & 0.081 & 0.070 & 0.075 \\
& 200  & 3 & 0.040 & 0.032 & 0.070 & 0.039 & 0.054
          & 0.062 & 0.057 & 0.099 & 0.072 & 0.087 \\
& 400  & 4 & 0.038 & 0.034 & 0.064 & 0.040 & 0.045
          & 0.060 & 0.058 & 0.097 & 0.055 & 0.067 \\
& 700  & 4 & 0.037 & 0.049 & 0.064 & 0.032 & 0.052
          & 0.052 & 0.062 & 0.088 & 0.051 & 0.074 \\
& 1000 & 5 & 0.021 & 0.044 & 0.068 & 0.028 & 0.043
          & 0.044 & 0.056 & 0.090 & 0.051 & 0.066 \\

\hline
\end{tabular}
\footnotesize
\justifying
\textit{Note:} Critical values are based on $\chi^2_q$ distributions and the standard normal distribution. Nominal size is $5\%$. All results are computed using cluster--robust variance. The data generating process  is simulated under the null hypothesis of no spillovers. The reported rejection frequencies correspond to the $\boldsymbol{cy}$ test. For the lattice design, the sample sizes are $n = 100, 210, 400, 702$ and $992$.
\end{table}

\noindent The results are reported in Table \ref{tab:sizesim}. For disturbances generated via $N(0,1)$ shocks, there appears to be some benefit from using the $\chi^2_q$ critical values even at $(n,p)=(1000,5)$ for the circulant and lattice cases. For the $t_{10}$ case, this remains true. In the remaining four designs for $W$, the results follow broadly the same pattern regardless of $N(0,1)$ or $t_{10}$ shocks: $\chi^2_q$ critical values can sometimes, though not always, control size better than the $N(0,1)$ one for small $(n,p)$, but this effect diminishes as $(n,p)$ grow. These results are on expected lines.

\renewcommand{\thesection}{B}
	\setcounter{equation}{0} \renewcommand{\theequation}{B.%
		\arabic{equation}} 
	\setcounter{theorem}{0} \renewcommand{\thetheorem}{B\arabic{theorem}}
  \section{Extensions}\label{sec:app_ext}
  \subsection{Heterogeneous cross-unit dependence}\label{sec:app_ext_mult}
  It is straightforward to extend our method to allow for multiple channels of cross-unit dependence, i.e. multiple social weight matrices, but we presented our theory for the single channel case in (\ref{model_1}) for notational simplicity.\footnote{Social interaction models with multiple social weight matrices have been characterized under a range of alternative assumptions \citep{Hsieh_Lin_2017,Arduini2020,comola_fortin_dieye}.} Indeed, suppose that we have $\ell$ channels of social dependence, each encoded in a weight matrix $W_h$, $h=1,\ldots,\ell$. Then we can write the model 
\begin{equation}\label{model_higherorder}
y_i=  \sum_{h=1}^{\ell}f_h\left(w_{h,i}' y\right) + x_i^\prime \beta +\sum_{h=1}^{\ell}\sum_{j=1}^l g_{hj}\left(w_{h,i}'c_j\right)+ \epsilon_i, \ \ \ i=1,....,n,
\end{equation}
where $f_h(\cdot)$ and $g_{hj}(\cdot)$ from $\mathbb{R}$ to $\mathbb{R}$ are $\ell(l+1)$ unknown functions and $W_h=\left(w_{h,1},\ldots,w_{h,n}\right)'$ are social weight matrices that are either fixed or exogenous and have zero diagonals, $h=1,\ldots,\ell$ and $j=1,\ldots,l$. The null of interest is then
\begin{equation}
	\mathcal{H}_0: f_h(s)=0 \text{ and } g_{hj}(s)=0,\;h=1,\ldots,\ell,\;j=1,\ldots,l,
\end{equation}
for all $s\in support(s)$. 

This can be approximated by series approximations exactly as in (\ref{lambda_prime}) and (\ref{approximate_null}) albeit with more subscripting, as we now show. Now we have the approximations 
\begin{equation}\label{lambda_prime_mult}
	f_h(s)=\sum_{i=1}^p \mu_{f_{h,i}} \psi_i(s)+r_{f_h}(s), g_{hj}(s)=\sum_{i=1}^p \mu_{g_{hj,i}} \psi_i(s)+r_{g_{hj}}(s),h=1,\ldots,\ell,j=1,\ldots,l,
\end{equation}
with $\mu_f= \left(\mu_{f_1,1},\ldots, \mu_{f_1,p},\ldots,\mu_{f_\ell,1},\ldots, \mu_{f_\ell,p}\right)^\prime$, $\mu_{g_j}= \left(\mu_{g_{1j},1},\ldots, \mu_{g_{1j},p},\mu_{g_{\ell j},1},\ldots, \mu_{g_{\ell j},p}\right)^\prime$  vectors of unknown series coefficients and $r_{f_h}(s)$, $r_{g_{hj}}(s)$ approximation errors. We define our approximate null hypothesis as
\begin{equation}\label{approximate_null_mult}
	\mathcal{H}_{0A}: \mu_f=0 \text{ and }\mu_{g_j}=0,\; j=1,\ldots,l, \text{ for some } \beta.
\end{equation} 
Now, define the $n\times 1$ vector $\Upsilon_{f_h,i}(y)= \left(\psi_i(w_{h,1}'y),\ldots,\psi_i(w_{h,n}'y)\right)'$
and the $n\times 1$ vectors $\Upsilon_{g_{hj},i}\left(c_j\right)= \left(\psi_i(w_{h,1}'c_j),\ldots,\psi_i(w_{h,n}'c_j)\right)'$, for each $i=1,\ldots,p$, $h=1,\ldots,\ell$ and $j=1,\ldots,l$. Next, concatenate these to write
$\Upsilon_{f,g}$ as before. This is now an $n\times q$ matrix, where $q=p\ell(l+1)$. We can now proceed as before in the $\ell=1$ case with these objects as the test statistic building blocks, noting that now we recommend
\begin{equation}\label{pxym_reco}
p_{cy,m}=\max\left\{2,\left[\frac{n^{1/3}}{\ell(l+1)}\right]\right\},  \boldsymbol {cy} \text{ test}.
\end{equation}

  \subsection{Embedded 
graphs}\label{sec:app_ext_netform}
We now consider a setting in which the observed social matrix is generated by an embedded graph model rather than treated as exogenous. By embedding the observed graph into a latent social space, we interpret it as a noisy measurement of an underlying latent structure. This reframing is particularly valuable in applied work, where social interaction data are often incomplete or measured with error, yet the latent structure remains informative to the researcher. When social interactions are mis-measured, our approach interprets these errors as small perturbations relative to the structurally generated network. We then derive formal conditions under which such perturbations are asymptotically negligible, ensuring the validity of inference based on the observed data.

Let us assume the researcher observes a network encoded by the $n\times n$ matrix $W(\kappa, z)$, the elements $w_{ij}(\kappa,z_o,z_l)$ of which are indicator functions that take the value unity if the unknown parameter vector $\kappa\in\mathscr{K}$, and the observed covariate vector $z_o$ and latent vector $z_l$ satisfy some prescribed condition. 
This specification encompasses standard network formation models with link functions that permit consistent estimation of $\kappa$, such as exponential link functions. The interpretation is that the adjacency matrix $W(\kappa,z)$ represents an embedding of the graph in a latent social space, rather than a fixed object observed without error. Such formulations generalize link formation based solely on observed covariates to settings where proximity in latent space governs tie formation, see for example \cite{Breza2020} and \cite{Lubold2023}. 

Accordingly we assume that, conditional on the latent variable vector $z_l$, the researcher has access to estimates $\hat{w}_{ij}=w_{ij}\left(\hat\kappa, z_o\right)$, where $\hat\kappa$ is some estimate of the parameterization of the link function that defines the probability of $w_{ij}=1$ as a function of $\kappa, z_o$ and $z_l$. Furthermore, we assume that there exists a sequence $s_n=s\rightarrow \infty$ such that
\begin{equation}\label{kappa_rate}
\left\Vert\hat\kappa-\kappa\right\Vert=O_p\left(s^{-1}\right),
\end{equation}
and that this rate of convergence carries over to the maximum row-sum of $ W\left(\hat\kappa,z\right)-W(\kappa,z)$, i.e.
\begin{equation}\label{row_sum_estimates}
\sup_{i=1,\ldots,n}\sum_{j=1}^n \left(\hat{w}_{ij}-w_{ij}\right)=O_p\left(s^{-1}\right).
\end{equation}

\noindent This condition is motivated by the observation that adjacency matrices require some control over their norms to limit dependence to a manageable degree. Row and column summability is a typical assumption. Indeed, if $w_{ij}(\kappa,z)$ were a differentiable function in $\kappa$ we could use the mean value theorem to write $\hat{w}_{ij}-w_{ij}=\left.\frac{\partial w_{ij}(\kappa,z)}{\partial \kappa}\right\vert_{\kappa=\bar\kappa}'\left(\hat{\kappa}-\kappa\right)$ for an intermediate point $\bar\kappa$ and obtain (\ref{row_sum_estimates}) if we assume row-summability of the derivative matrix, i.e. 
\[
\sup_{i=1,\ldots,n}\sup_{\kappa\in\mathscr{K}}\sum_{j=1}^n\frac{\partial w_{ij}(\kappa,z)}{\partial \kappa}=O_p(1),
\]
uniformly in $z$. Of course in our case $\hat w_{ij}$ and $w_{ij}$ are non-differentiable because they are indicator functions, hence the condition (\ref{row_sum_estimates}). For example, if each unit only has a fixed number of neighbours, as in a `$k$ nearest neighbours' setup, then (\ref{row_sum_estimates}) will be satisfied as long as 
\begin{equation}\label{w_diff}
\hat{w}_{ij}-w_{ij}=O_p\left(s^{-1}\right),
\end{equation}
because the sum on the LHS of (\ref{row_sum_estimates}) will have only a fixed number of non-zero summands.

Now, for each $i=1,\ldots,p$, define the $n\times 1$ vector $\hat\Upsilon_{f,i}(y)= \left(\psi_i(\hat w_1'y),\ldots,\psi_i(\hat w_n'y)\right)'$
and the $n\times 1$ vectors $\hat\Upsilon_{g_j,i}(c_j)= \left(\psi_i(\hat w_1'c_j),\ldots,\psi_i(\hat w_n'c_j)\right)'$, where $\hat w_i$ has elements $\hat w_{ij}$, $j=1,\ldots,n$, and write
\[
\hat\Upsilon_{f,g}=\begin{pmatrix} \hat\Upsilon_{f,1}(y)& \ldots& \hat\Upsilon_{f,p}(y)&\hat\Upsilon_{g_1,1}(c_1)& \ldots& \hat\Upsilon_{g_1,p}(c_1)&\ldots& \hat\Upsilon_{g_l,1}(c_l)& \ldots& \hat\Upsilon_{g_l,p}(c_l) \end{pmatrix},
\]
which is an $n\times q$ matrix, where $q\sim p$ asymptotically. Then, we can apply our test as long as the estimates $\hat{w}_{ij}$ satisfy
\begin{equation}\label{Upsilonapprox}
\left\Vert n^{-1/2}\hat\Upsilon_{f,g}-n^{-1/2}\Upsilon_{f,g}\right\Vert=o_p(1),    
\end{equation}
conditional on $z_l$. The squared LHS of (\ref{Upsilonapprox}) is bounded by a sum of $nq$ terms of the type
\begin{equation}\label{Upsilonapprox2}
n^{-1}\left(\psi\left(\hat w'\ell\right)-\psi\left(w'\ell\right)\right)^2,    
\end{equation}
where we omit subscripting for brevity and let $\ell$ denote a generic observed $n\times 1$ vector with components $\ell_i$ such that $\ell_i=O_p(1)$, uniformly in $i$. Assuming that the generic basis function $\psi(\cdot)$ is differentiable with derivative $\psi'(\cdot)$ and $E(\psi'(x))^2<C$, with $\boldsymbol{c}$ a generic constant, we use the mean value theorem and (\ref{row_sum_estimates}) to observe that (\ref{Upsilonapprox2}) is
\begin{eqnarray}
n^{-1}\psi'(\bar{x})^2\left(\left(\hat w-w\right)'\ell\right)^2&=&O_p\left(n^{-1}\right)\cdot\left(\sum_{i=1}^n\left(\hat w_i-w_i\right)\ell_i\right)^2\nonumber\\
&=&O_p\left(n^{-1}\right)\cdot\left(\sum_{i=1}^n\left(\hat w_i-w_i\right)\right)^2\nonumber\\
&=&
O_p\left(s^{-2}n^{-1}\right),\label{Upsilonapprox3}
\end{eqnarray}
 where $\hat w'\ell\leq \bar x\leq w'\ell$ and $\hat w_i$ and $w_i$ are elements of the $n\times 1$ vectors $\hat w$ and $w$, respectively. Then we conclude that
\begin{equation}\label{Upsilonapprox4}
\left\Vert n^{-1/2}\hat\Upsilon_{f,g}-n^{-1/2}\Upsilon_{f,g}\right\Vert^2=O_p\left(ps^{-2}\right),    
\end{equation}
so that (\ref{Upsilonapprox}) holds if $p^{1/2}s^{-1}\rightarrow 0$.   \renewcommand{\thesection}{C}
	\setcounter{equation}{0} \renewcommand{\theequation}{C.%
		\arabic{equation}} 
	\setcounter{theorem}{0} \renewcommand{\thetheorem}{C\arabic{theorem}}
    
\section{Proofs of Theorems}\label{appendix:proofs}
\subsection{Preliminary results}
\allowdisplaybreaks
	\begin{theorem}\label{theorem:dhatd} 
		Under Assumptions \ref{ass:errors}-\ref{ass:Mhat}, under $\mathcal{H}_{0A}$ in (\ref{approximate_null}), for $p^3/n \rightarrow 0$ as $n\rightarrow \infty$,
		\begin{equation}
			\left\Vert \hat{d} - d \right \Vert =O_p\left( \frac{p^{3/2}}{n} \right).
		\end{equation}
		
	\end{theorem}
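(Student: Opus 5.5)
We write $\hat d$ exactly as a linear form in $\epsilon$ and then compare it, piece by piece, with $d$. Under $\mathcal{H}_{0A}$ we have $y=X\beta_0+\epsilon$, and the null IV estimator is
$\hat\beta-\beta_0=(\hat N'\hat M^{-1}\hat N)^{-1}\hat N'\hat M^{-1}n^{-1}Z'\epsilon$. Since $\mathcal P_Z=n^{-1}Z\hat M^{-1}Z'$, this gives
\[
\hat d=-\frac{2}{n}\hat J'\hat M^{-1/2}\hat{\mathcal K}_{NM}\hat M^{-1/2}Z'\epsilon,
\]
where
\[
\hat{\mathcal K}_{NM}=I-\hat M^{-1/2}\hat N(\hat N'\hat M^{-1}\hat N)^{-1}\hat N'\hat M^{-1/2}.
\]
This is the sample analogue of $d$. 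The difference $\hat d-d$ is then a telescoping sum of terms, in each of which exactly one of $\hat J,\hat M^{-1/2},\hat{\mathcal K}_{NM},\hat M^{-1/2}$ is replaced by its population counterpart minus itself. Each such term is multiplied by the common factor $n^{-1}Z'\epsilon$.

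The first set of ingredients are rates for the matrix deviations. By Chebyshev applied entrywise, using Assumptions \ref{ass:regressors} and \ref{ass:eigsandinsts} for the fourth moments and Assumption \ref{ass:Mhat} ($\xi+\varkappa=O(n)$) for the cross covariances, each entry of $\hat M-M$, $\hat J-J$ and $\hat N-N$ has variance $O(n^{-1})$. Bounding the spectral norm by the Frobenius norm, with $m\asymp q\asymp p$, gives
\[
\|\hat M-M\|=O_p(p/n^{1/2}),\qquad \|\hat J-J\|=O_p(p/n^{1/2}),\qquad \|\hat N-N\|=O_p(p^{1/2}/n^{1/2}),
\]
the last because $\hat N$ has only $k$ columns. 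All three are $o_p(1)$ because $p^3/n\to0$ (in fact $p^2/n\to0$ suffices). Assumption \ref{ass:eigsandinsts} bounds the eigenvalues of $M$ and $L$ away from zero and infinity, so $\hat M$ is invertible with probability approaching one. The identities $\hat M^{-1}-M^{-1}=\hat M^{-1}(M-\hat M)M^{-1}$ and a similar one for the matrix square root (via the integral representation, or via Lemma \ref{lemma:Omegaeigs}) transfer the rate to $\|\hat M^{-1/2}-M^{-1/2}\|$. The same argument applied to $(\hat N'\hat M^{-1}\hat N)^{-1}$ yields $\|\hat{\mathcal K}_{NM}-\mathcal K_{NM}\|=O_p(p/n^{1/2})$. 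Finally, $\|\hat J\|$, $\|J\|$ and the norms of both projection-type factors are $O_p(1)$.

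The second ingredient is a rate for the score. By Assumptions \ref{ass:errors} and \ref{ass:eigsandinsts}, $\epsilon$ is uncorrelated with $Z$ and independent across clusters with bounded cluster sizes, and $\mathbb E(z_{il}^2\epsilon_i\epsilon_j)$ is bounded. Hence
\[
\mathbb E\|n^{-1}Z'\epsilon\|^2=n^{-2}\sum_{l=1}^m\sum_{g}\sum_{i,j\in g}\mathbb E(z_{il}z_{jl}\epsilon_i\epsilon_j)=O(m/n),
\]
so $\|n^{-1}Z'\epsilon\|=O_p(p^{1/2}/n^{1/2})$. Combining this with the deviation rates, every telescoped term is
\[
O_p(1)\cdot O_p(p/n^{1/2})\cdot O_p(p^{1/2}/n^{1/2})=O_p(p^{3/2}/n),
\]
which is the claim.

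The step I expect to be delicate is the rate $\|\hat M-M\|=O_p(p/n^{1/2})$ and its analogue for $\hat J$. The entries of $U$ involve $\psi_i(w_i'y)$, which depends on $\epsilon$ through the $y$-columns. So the bound $\mathbb E(\cdot)^2\le K$ on these entries, and the covariance summability in Assumption \ref{ass:Mhat}, must be invoked as stated rather than derived from independence. A related point, which I would handle the same way, is that $\hat J-J$ is then correlated with $Z'\epsilon$. To avoid needing any independence, I would bound the product of norms deterministically via the spectral-norm inequality $\|AB\|\le\|A\|\|B\|$. Care is also needed in the square-root perturbation. I would avoid it by writing $M^{-1/2}\mathcal K_{NM}M^{-1/2}=M^{-1}-M^{-1}N(N'M^{-1}N)^{-1}N'M^{-1}$, which expresses everything through $M^{-1}$ alone.
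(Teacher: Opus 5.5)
Your treatment of the $\epsilon$-part matches the paper's proof. You telescope the deviations of $\hat J$, $\hat M^{-1}$, $\hat N$ and $(\hat N'\hat M^{-1}\hat N)^{-1}$, each $O_p(p/\sqrt n)$, against $\Vert n^{-1}Z'\epsilon\Vert=O_p(\sqrt{p/n})$, and you write everything through $M^{-1}$ to avoid square roots.

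\textbf{The gap.} Your first step asserts that under $\mathcal{H}_{0A}$ one has $y=X\beta_0+\epsilon$. That holds under the exact null $\mathcal{H}_0$, not under $\mathcal{H}_{0A}$. The approximate null (\ref{approximate_null}) only sets the series coefficients $\mu_f,\mu_{g_j}$ to zero. By (\ref{lambda_prime}) the data then satisfy $y=X\beta_0+R+\epsilon$, where $R_i=r_f(w_i'y)+\sum_{j}r_{g_j}(w_i'c_j)$ need not vanish. Hence $\hat\beta-\beta_0$ also contains $(\hat N'\hat M^{-1}\hat N)^{-1}\hat N'\hat M^{-1}n^{-1}Z'R$, and
\[
\hat d=-\frac{2}{n}\hat J'\hat M^{-1/2}\hat{\mathcal K}_{NM}\hat M^{-1/2}Z'(\epsilon+R),
\]
whereas $d$ involves $\epsilon$ only. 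As written, your argument never bounds the $R$-term. It also never uses the approximation-rate condition in Assumption \ref{ass:eigsandinsts}, which is precisely where that condition enters. So what you have proved is the statement under $\mathcal{H}_0$, not under $\mathcal{H}_{0A}$.

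\textbf{The fix.} This corresponds to the second term in the paper's decomposition of $\Vert\hat d-d\Vert$, and it is short:
\begin{itemize}
\item $\hat{\mathcal K}_{NM}$ is an orthogonal projection, so $\Vert\hat{\mathcal K}_{NM}\Vert\le 1$.
\item $\Vert n^{-1/2}\hat M^{-1/2}Z'\Vert=1$, since $(n^{-1/2}\hat M^{-1/2}Z')(n^{-1/2}\hat M^{-1/2}Z')'=I_m$.
\item Therefore the $R$-term has norm at most $2\Vert\hat J\Vert\,\Vert\hat M^{-1/2}\Vert\,n^{-1/2}\Vert R\Vert=O_p(1)\cdot\sup_i|R_i|=O_p(p^{-\nu})$.
\item Finally, $p^{-\nu}=o(p^{3/2}/n)$, because $n/p^{\nu+3/2}\le n/p^{\nu+1/2}=o(1)$ by Assumption \ref{ass:eigsandinsts}.
\end{itemize}
With this term carried through, your argument delivers the stated $O_p(p^{3/2}/n)$ rate.
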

	
	\begin{proof}
    We first establish a preliminary bound. Let $1_g(i,j)$ be and indicator function that takes the value 1 when $i$ and $j$ are in the same cluster $g$ and zero otherwise. Observe that, by Assumptions \ref{ass:errors}, \ref{ass:eigsandinsts} and $m\sim p$,
    \begin{eqnarray}
			\mathbb{E}\left\Vert n^{-1}Z^\prime \epsilon\right\Vert^2&=&n^{-2}\sum_{i=1}^n\sigma_i^2\mathbb{E}\left\Vert z_i\right\Vert^2+n^{-2}\sum_{g=1}^G\sum_{i\neq j}1_g(i,j)\sigma_{ij}\mathbb{E}z_{ig}'z_{jg}\nonumber\\
            &\leq& Kn^{-1}m+n^{-2}\sum_{g=1}^G\sum_{i\neq j}1_g(i,j)\sigma_{ij}\left(\mathbb{E}\left\Vert z_{ig}\right\Vert^2\right)^{1/2}\left(\mathbb{E}\left\Vert z_{jg}\right\Vert^2\right)^{1/2}\nonumber\\
            &=& O\left(p\left(n^{-1}+n^{-2}\sum_{g=1}^Gn_g^2\right)\right)=O\left(pn^{-1}\right),\label{theorem1_1}
		\end{eqnarray}
	where the last equality follows because  $n^{-2}\sum_{g=1}^Gn_g^2=O\left(n^{-2}G\right)=O\left(n^{-1}\right)$, recalling that $\sup_{g=1,\ldots,G}n_g<K$ and so $G$ and $n$ have the same asymptotic order. Thus, by the Markov inequality,
\begin{equation}\label{Zprimeespilonbound}
\left\Vert n^{-1}Z^\prime \epsilon\right\Vert=O_p\left(\sqrt{\frac{{p}}{n}}\right).
\end{equation}
We also note that, under Assumptions \ref{ass:regressors}, \ref{ass:eigsandinsts} and \ref{ass:Mhat},
		\begin{align}\label{IV_rate1}
			\left \Vert \hat{\beta}- \beta\right \Vert =& \left\Vert\left(\frac{1}{n}X^\prime \mathcal{P}_Z X\right)^{-1} \frac{1}{n} X^\prime \mathcal{P}_Z \epsilon \right\Vert =O_p\left( \left\Vert \frac{Z^\prime \epsilon}{n}\right \Vert\right)  .
		\end{align}
			
		\noindent Let $R=\left( r(w_1' y), \ldots ,r(w_n' y)  \right)^\prime$ be the $n\times 1$ vector of approximation errors in (\ref{lambda_prime}) with $R_i=r(w_i'y)$. From the 2SLS expression for $\hat{\beta}- \beta$ in (\ref{IV_rate1}),
		\begin{align}\label{dhat_equiv}
			\hat{d}=& -\frac{2}{n}U^\prime  \mathcal{P}_Z \left(I -  X  ( X ^\prime \mathcal{P}_Z  X )^{-1}  X ^\prime \mathcal{P}_Z \right) \epsilon  -\frac{2}{n}U^\prime  \mathcal{P}_Z R \notag \\
			=& -\frac{2}{n}U^\prime \mathcal{P}_Z \left(I - \mathcal{P}_Z  X  ( X ^\prime \mathcal{P}_Z  X )^{-1}  X ^\prime \mathcal{P}_Z \right) \mathcal{P}_Z \epsilon  -\frac{2}{n} U^\prime  \mathcal{P}_Z R \notag \\
			=& -\frac{2}{n} \hat{J}^\prime \hat{M}^{-1/2}\left(I - \hat{M}^{-1/2} \hat{N}\left( \hat{N}^\prime \hat{M}^{-1} \hat{N} \right)^{-1} \hat{N}^\prime \hat{M}^{-1/2} \right) \hat{M}^{-1/2} Z^\prime \epsilon - \frac{2}{n} \hat{J}^\prime \hat{M}^{-1} Z^\prime R \notag \\
			=& -\frac{2}{n} \hat{J}^\prime \hat{M}^{-1/2} \hat{\mathcal{K}}_{NM} \hat{M}^{-1/2} Z^\prime \epsilon - \frac{2}{n} \hat{J}^\prime \hat{M}^{-1} Z^\prime R, 
		\end{align}
		where $\hat{\mathcal{K}}_{NM}= \left(I - \hat{M}^{-1/2} \hat{N}\left( \hat{N}^\prime \hat{M}^{-1} \hat{N} \right)^{-1} \hat{N}^\prime \hat{M}^{-1/2} \right)$. 		From (\ref{dhat}), we write
		\begin{align}\label{d_decomposition}
			\left\Vert \hat{d} - d\right \Vert  \leq \left\Vert \frac{2}{n} \hat{J}^\prime \hat{M}^{-1/2}\hat{\mathcal{K}}_{NM}\hat{M}^{-1/2} Z^\prime \epsilon  - \frac{2}{n} J^\prime M^{-1/2}\mathcal{K}_{NM} M^{-1/2} Z^\prime \epsilon  \right \Vert +\left\Vert \frac{2}{n} \hat{J}^\prime \hat{M}^{-1} Z^\prime R\right\Vert.
		\end{align}
		Subsequently, denote $\Delta^A_B=A-B$ for conformable $A$ and $B$. Then, via some standard albeit tedious algebra, the first term on the RHS of (\ref{d_decomposition}) is bounded by
		\begin{align}
			&\left\Vert \Delta^{\hat{J}}_J \right\Vert \left\Vert \hat{M}^{-1} \right\Vert \left\Vert \frac{1}{n} Z^\prime \epsilon \right\Vert + \left\Vert J\right\Vert \left\Vert \Delta^{\hat{M}^{-1}}_{M^{-1}}\right \Vert \left\Vert \frac{1}{n}Z^\prime \epsilon \right\Vert  \notag\\
			&+\left\Vert \Delta^{\hat{J}}_{J}\right \Vert \left\Vert \hat{M}^{-1}\right \Vert \left\Vert \hat{N}\right \Vert \left\Vert \left( \hat{N}^\prime \hat{M}^{-1} \hat{N}\right)^{-1}\right \Vert \left\Vert \hat{N} \right \Vert  \left\Vert \hat{M}^{-1} \right \Vert \left\Vert \frac{1}{n}Z^\prime \epsilon\right \Vert \notag \\
			+& \left\Vert J \right \Vert \left\Vert\Delta^{\hat{M}^{-1}}_{M^{-1}}\right \Vert \left\Vert \hat{N}\right \Vert \left\Vert \left( \hat{N}^\prime \hat{M}^{-1} \hat{N}\right)^{-1}\right \Vert \left\Vert \hat{N} \right \Vert  \left\Vert \hat{M}^{-1}\right \Vert \left\Vert \frac{1}{n}Z^\prime \epsilon\right \Vert \notag \\ +& \left\Vert J \right \Vert \left\Vert M^{-1}\right \Vert \left\Vert \Delta^{\hat{N}}_{N}\right \Vert \left\Vert \left( \hat{N}^\prime \hat{M}^{-1} \hat{N}\right)^{-1}\right \Vert \left\Vert \hat{N} \right \Vert  \left\Vert \hat{M}^{-1}\right \Vert \left\Vert \frac{1}{n}Z^\prime \epsilon\right \Vert  \notag \\
			+& \left\Vert J \right \Vert \left\Vert M^{-1}\right \Vert \left\Vert N\right \Vert \left\Vert \Delta^{\left( \hat{N}^\prime \hat{M}^{-1} \hat{N}\right)^{-1}}_{\left(N^\prime M^{-1} N\right)^{-1}} \right \Vert \left\Vert \hat{N} \right \Vert  \left\Vert \hat{M}^{-1}\right \Vert \left\Vert \frac{1}{n}Z^\prime \epsilon\right \Vert \notag \\
			+& \left\Vert J \right \Vert \left\Vert M^{-1}\right \Vert \left\Vert N\right \Vert \left\Vert  \left(N^\prime M^{-1} N\right)^{-1} \right \Vert \left\Vert\Delta^{\hat{N}}_{N}  \right \Vert  \left\Vert \hat{M}^{-1}\right \Vert \left\Vert \frac{1}{n}Z^\prime \epsilon\right \Vert \notag \\+& \left\Vert J \right \Vert \left\Vert M^{-1}\right \Vert \left\Vert N\right \Vert \left\Vert  \left(N^\prime M^{-1} N\right)^{-1} \right \Vert \left\Vert N  \right \Vert  \left\Vert \Delta^{\hat{M}^{-1}}_{M^{-1}}\right \Vert \left\Vert \frac{1}{n}Z^\prime \epsilon\right \Vert 
		\end{align}
		Under Assumption \ref{ass:Mhat}, we have
		\begin{equation}
			\left\Vert \Delta^{\hat{N}}_N\right \Vert =O_p\left(\frac{p}{\sqrt{n}}\right) \ \ \ \ \text{and} \ \ \ \ \  \left\Vert \Delta^{\hat{J}}_J \right \Vert =O_p\left(\frac{p}{\sqrt{n}}\right) .
		\end{equation}
		Also, under Assumptions \ref{ass:eigsandinsts} and \ref{ass:Mhat}, 
		\begin{equation}
			\left\Vert\Delta^{\hat{M}^{-1}}_{M^{-1}} \right \Vert \leq \left\Vert M^{-1}\right \Vert  \left\Vert \hat{M}^{-1}\right \Vert  \left\Vert\Delta^{\hat{M}}_{M} \right \Vert = O_p\left(\frac{p}{\sqrt{n}}\right)
		\end{equation}
		and similarly, under Assumptions  \ref{ass:eigsandinsts} and \ref{ass:Mhat}, 
		\begin{align}
			\left\Vert \Delta^{\left( \hat{N}^\prime \hat{M}^{-1} \hat{N}\right)^{-1}}_{\left(N^\prime M^{-1} N\right)^{-1}}\right\Vert &\leq \left\Vert N^\prime M^{-1} N \right \Vert  \left\Vert \hat{N}^\prime \hat{M}^{-1} \hat{N}\right \Vert \left\Vert \Delta^{ \hat{N}^\prime \hat{M}^{-1} \hat{N}}_{N^\prime M^{-1} N}\right \Vert \notag\\
			&= O_p\left(\frac{p}{\sqrt{n}}\right).
		\end{align}
		
		\noindent Thus, upon recalling (\ref{Zprimeespilonbound}), the first term at the RHS of (\ref{d_decomposition}) is observed to be $O_p\left( n^{-1}{p^{3/2}} \right)$. The second term at the RHS of (\ref{d_decomposition}) is instead
		\begin{equation}\label{d_remainder}
			\left\Vert \frac{2}{n}\hat{J}^\prime \hat{M}^{-1} Z^\prime R \right \Vert = O_p\left(\frac{1}{n} \left\Vert R\right \Vert \left\Vert Z\right \Vert \right) = O_p(p^{-\nu}),
		\end{equation}
		where the first equality at the RHS of (\ref{d_remainder}) follows under Assumptions \ref{ass:eigsandinsts} and \ref{ass:Mhat}. The second equality follows since $\Vert Z \Vert = O_p(\sqrt{n})$ under Assumptions  \ref{ass:eigsandinsts} and \ref{ass:Mhat}, and each component of the $n\times 1$ vector $R$ is $	 O_p(p^{-\nu})$
		by Assumption \ref{ass:eigsandinsts}, and hence $||R||= O_p(\sqrt{n}  p^{-\nu})$. The last equality in (\ref{d_remainder}) follows from Assumption \ref{ass:eigsandinsts}.
		
		Under Assumption \ref{ass:Mhat}, the first term in (\ref{d_decomposition}) dominates the second one as long as $\nu$ satisfies ${n}/{p^{\nu+3/2}}=o(1)$ as $n\rightarrow \infty$, which holds under Assumption \ref{ass:eigsandinsts}.
	\end{proof}

\begin{theorem}\label{theorem:approx}
	Under Assumptions \ref{ass:errors}-\ref{ass:Mhat}, $\nu>5/2$, under $\mathcal{H}_{0A}$ in (\ref{approximate_null}) and $p^3/n =o(1)$,
	\begin{equation}
		\mathcal{S} - \frac{n d^\prime H^{-1} d - q}{\sqrt{2q}} = o_p(1), \text{ as } n\rightarrow \infty. 
	\end{equation}
\end{theorem}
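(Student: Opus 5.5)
Since $\sqrt{2q}\asymp\sqrt{p}$, it suffices to show $n\hat d'\hat H^{-1}\hat d - n d'H^{-1}d = o_p(p^{1/2})$. I would split this difference as
\begin{equation*}
n\hat d'\hat H^{-1}\hat d - n d'H^{-1}d = n(\hat d-d)'\hat H^{-1}(\hat d-d) + 2n(\hat d-d)'\hat H^{-1}d + n d'\left(\hat H^{-1}-H^{-1}\right)d,
\end{equation*}
and bound each term using spectral-norm inequalities.

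The preliminary ingredients are as follows.
\begin{enumerate}
\item[(a)] $\Vert d\Vert=O_p(\sqrt{p/n})$. This follows from (\ref{Zprimeespilonbound}), together with the bounded norms of $J$ and $M^{-1/2}$ and the fact that $\mathcal{K}_{NM}$ is idempotent. Equivalently, $\mathbb{E}(n\,d'H^{-1}d)=q$.
\item[(b)] $\Vert\hat d-d\Vert=O_p(p^{3/2}/n)$, by Theorem \ref{theorem:dhatd}.
\item[(c)] $\Vert H^{-1}\Vert=O(1)$, by Lemma \ref{lemma:Omegaeigs} and Assumption \ref{ass:eigsandinsts}.
\item[(d)] A rate for $\Vert\hat H-H\Vert$. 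This is the substantive step.
\end{enumerate}
Given (c) and (d), $\Vert\hat H^{-1}\Vert=O_p(1)$ and $\Vert\hat H^{-1}-H^{-1}\Vert\le\Vert\hat H^{-1}\Vert\Vert H^{-1}\Vert\Vert\hat H-H\Vert$.

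With these in hand, the three terms are bounded as follows.
\begin{itemize}
\item The first term is $O_p(n\cdot p^3/n^2)=O_p(p^3/n)=o(1)$.
\item The cross term is $O_p(n\cdot p^{3/2}n^{-1}\cdot p^{1/2}n^{-1/2})=O_p(p^2/\sqrt n)$. Dividing by $\sqrt p$ gives $p^{3/2}/\sqrt n=(p^3/n)^{1/2}\to0$.
\item The last term is $O_p(p\,\Vert\hat H-H\Vert)$. It is therefore $o_p(\sqrt p)$ provided $\Vert\hat H-H\Vert=o_p(p^{-1/2})$.
\end{itemize}

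For (d), $\hat H$ as defined in (\ref{Hhat}) omits the projection $\mathcal{K}_{NM}$ that appears in $H$. I would first note that the $\mathcal{K}$ factors can be reinstated at negligible cost: since $\hat\Phi$ uses residuals, $\hat J'\hat M^{-1}Z'\hat\epsilon$ equals $\hat J'\hat M^{-1/2}\hat{\mathcal{K}}_{NM}\hat M^{-1/2}Z'\epsilon$ up to the approximation error $R$. If the literal definition turns out not to match, I would work with the $\mathcal{K}$-sandwiched version, as the paper implicitly does. I would then telescope $\hat H-H$ into differences $\Delta^{\hat J}_J$, $\Delta^{\hat M^{-1}}_{M^{-1}}$, $\Delta^{\hat N}_N$ and $\Delta^{\hat\Phi}_\Phi$, each multiplied by $O_p(1)$ norms, exactly as in the proof of Theorem \ref{theorem:dhatd}.

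From Assumption \ref{ass:Mhat}, the first three differences are $O_p(p/\sqrt n)$. Since $p/\sqrt n\cdot\sqrt p=(p^3/n)^{1/2}\to0$, they contribute $o_p(p^{-1/2})$.

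The main obstacle is $\Vert\hat\Phi-\Phi\Vert$. I would write it as
\begin{equation*}
\hat\Phi-\Phi = n^{-1}Z'(\hat\Sigma-\tilde\Sigma)Z + n^{-1}\left(Z'\tilde\Sigma Z-\mathbb{E}Z'\Sigma Z\right),
\end{equation*}
where $\tilde\Sigma$ is built from the true $\epsilon_i\epsilon_j$ within clusters.

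\emph{First piece.} Using $\hat\epsilon_i=\epsilon_i-x_i'(\hat\beta-\beta)+R_i$, together with (\ref{IV_rate1}), (\ref{Zprimeespilonbound}) and $\nu>5/2$, this piece is $O_p(p\sqrt{p/n}+p^{1-\nu})$. The $p$ factor arises from bounding spectral norms by Frobenius norms over $m\sim p$ columns.

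\emph{Second piece.} This is a centred average of independent cluster-level $m\times m$ matrices. Bounding its squared Frobenius norm in expectation, using the fourth- and eighth-moment conditions in Assumptions \ref{ass:errors}–\ref{ass:regressors} and bounded cluster sizes, gives $O(p^2/n)$, hence $O_p(p/\sqrt n)$.

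Both pieces are $o(p^{-1/2})$ under $p^3/n\to0$ and $\nu>5/2$, which completes the argument. The delicate point is making sure the Frobenius bounds do not cost an extra factor of $p$ beyond $p^3/n$. If they do, I would try to obtain sharper spectral bounds via the linear-process representation of $\epsilon$.
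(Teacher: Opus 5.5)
Your route is the paper's: the same three-term expansion, Theorem \ref{theorem:dhatd} for $\hat d-d$, $\Vert d\Vert=O_p(\sqrt{p/n})$, and a telescoping bound on the variance matrix with $\hat\Phi-\Phi$ split through the infeasible $\tilde\Phi$. However, the step that closes (d) does not go through as written. Your bound for the residual piece is $O_p(p\sqrt{p/n})=O_p(p^{3/2}n^{-1/2})$. But $p^{3/2}n^{-1/2}=o(p^{-1/2})$ requires $p^4/n\rightarrow 0$, not $p^3/n\rightarrow 0$. The paper's argument closes only because every component of $\hat H-H$ is $O_p(p/\sqrt n)$. The extra $\sqrt p$ comes from using the crude rate $\Vert\hat\beta-\beta\Vert=O_p(\sqrt{p/n})$ from (\ref{IV_rate1}). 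Since $\beta$ is $k$-dimensional, write $\hat\beta-\beta=(\hat N'\hat M^{-1}\hat N)^{-1}\hat N'\hat M^{-1}Z'\epsilon/n$. The leading part $N'M^{-1}Z'\epsilon/n$ is $O_p(n^{-1/2})$, and the remainder is $O_p(p/\sqrt n)\cdot O_p(\sqrt{p/n})=o_p(n^{-1/2})$ under $p^3/n\rightarrow 0$. With $\Vert\hat\beta-\beta\Vert=O_p(n^{-1/2})$, your Frobenius bounds give $O_p(p/\sqrt n)$ for both $n^{-1}\sum_g Z_g'X_g(\hat\beta-\beta)\epsilon_g'Z_g$ (with $Z_g,X_g,\epsilon_g$ the cluster-$g$ blocks) and the quadratic term, which suffices. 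A smaller point: only $\epsilon$, not $Z$, is independent across clusters. Your second piece should therefore be split through $\bar\Phi=Z'\Sigma Z/n$ and handled with Assumption \ref{ass:Mhat}, as the paper does.

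The second problem is how you reconcile $\hat H$ with $H$. The $\hat{\mathcal K}_{NM}$ form is valid for $\hat d$ because the gradient is linear in $\hat\epsilon$. But $\hat\Phi=n^{-1}\sum_g Z_g'\hat\epsilon_g\hat\epsilon_g'Z_g$ keeps only within-cluster products, so no $\hat{\mathcal K}_{NM}$ factors emerge. The residual correction produces only the vanishing terms above, and $\Vert\hat H-A\Vert=O_p(p/\sqrt n)$ with $A=4J'M^{-1}\Phi M^{-1}J$. The lower-right $k\times k$ block of $A$, $4N'M^{-1}\Phi M^{-1}N$, is positive definite. By contrast, $\mathcal K_{NM}M^{-1/2}N=0$, so $H$ in (\ref{H}) has zero last $k$ rows and columns and is singular. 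Consequently (c) is false and (d) cannot hold. Your fallback of sandwiching $\hat H$ with $\hat{\mathcal K}_{NM}$ changes the statistic and makes it singular too. The paper's (\ref{Hhat_equiv}) makes the same identification, so this defect is shared rather than a departure, but it is repairable.

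The repair runs as follows. Both $\hat d$ (by the 2SLS normal equations) and $d$ vanish in their last $k$ coordinates, so read the target as $nd_p'H_{11}^{-1}d_p$. Here $d_p$ is the first $q$ coordinates of $d$ and $H_{11}$ is the upper-left $q\times q$ block of $H$. Run your three-term argument with $A$ in place of $H$. Then $\Vert A^{-1}\Vert=O(1)$, no $\mathcal K_{NM}$ algebra is needed, and you obtain $n\hat d'\hat H^{-1}\hat d-nd_p'S^{-1}d_p=o_p(\sqrt p)$, where $S=A_{11}-A_{12}A_{22}^{-1}A_{21}$ for the partition of $A$ conformable with $(\mu',\beta')'$. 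Finally, $H_{11}=(I_q,-B)A(I_q,-B)'$ with $B=\Xi'M^{-1}N(N'M^{-1}N)^{-1}$. Hence $H_{11}-S=(B-A_{12}A_{22}^{-1})A_{22}(B-A_{12}A_{22}^{-1})'$ has rank at most $k$. It follows that $S^{-1}-H_{11}^{-1}$ has bounded norm and rank at most $k$. Since $\sqrt n\,u'd_p=O_p(1)$ for every deterministic unit vector $u$, this gives $nd_p'(S^{-1}-H_{11}^{-1})d_p=O_p(1)=o_p(\sqrt q)$.
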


\begin{proof}
We can equivalently prove
		\begin{equation}\label{th2_equiv}
			\hat{d}^\prime \hat{H}^{-1}\hat{d} -d^\prime H^{-1}d= o_p\left(\frac{\sqrt{p}}{n}\right).
		\end{equation}
		Write the LHS of (\ref{th2_equiv}) as
		\begin{equation}\label{th2_terms}
			\left(\hat{d} -d \right)^\prime \hat{H}^{-1}\hat{d} + d^\prime H^{-1} (\hat{d}-d) + d^\prime \hat{H}^{-1}\left(H- \hat{H}\right) H^{-1}\hat{d},
		\end{equation}
		which has norm bounded by
		\begin{equation}\label{theorem2_1}
			K\left\Vert \Delta^{\hat{d}}_d \right \Vert \left\Vert \hat{H}^{-1} \right \Vert \left\Vert \hat{d}\right \Vert + K\left\Vert\Delta^{\hat{d}}_d \right \Vert \left\Vert H^{-1} \right \Vert \left\Vert d\right \Vert + K \left\Vert d\right \Vert  \left \Vert  \hat{H}^{-1} \right \Vert \left \Vert\Delta^{\hat{H}}_H\right \Vert \left \Vert H^{-1}\right \Vert \left\Vert \hat{d}\right \Vert .
		\end{equation}
		From Theorem \ref{theorem:dhatd}, $\left\Vert \Delta^{\hat{d}}_d \right \Vert = O_p\left( n^{-1}{p^{3/2}} \right)$. 
		Under Assumptions \ref{ass:eigsandinsts} and \ref{ass:Mhat}, and from (\ref{theorem1_1}) we have $\left\Vert d\right \Vert = O_p\left({\sqrt{p/{n}}}\right)$. Also, from Theorem \ref{theorem:dhatd},
		\begin{equation}
			\left\Vert \hat{d}\right \Vert \leq \left\Vert \Delta^{\hat{d}}_d\right \Vert + \left\Vert d\right \Vert = O_p\left(\sqrt{\frac{p}{n}}\right),
		\end{equation}
		where the last equality is due to $p^2/n =o(1)$.
		Also, under Assumptions \ref{ass:errors},  \ref{ass:eigsandinsts} and \ref{ass:Mhat}, $\left\Vert \hat{H}^{-1} \right \Vert=O_p(1)$ and $\left\Vert H^{-1} \right \Vert=O_p(1)$. Thus,  the first and second terms in (\ref{theorem2_1}) are $O_p\left({p^2}/{n^{3/2}}\right)$, and these are $o_p\left(p^{1/2}/n\right)$ if ${p^{3}}/{n}=o(1)$.
		
		Using 2SLS estimates for $\beta_0$ and proceeding as in (\ref{dhat_equiv}), we can write
		\begin{align}\label{Hhat_equiv}
			\hat{H}&= 4 \hat{J}^\prime \hat{M}^{-1/2} \hat{\mathcal{K}}_{NM} \hat{M}^{-1/2} \tilde{\Phi}\hat{M}^{-1/2} \hat{\mathcal{K}}_{NM} \hat{M}^{-1/2} \hat{J} + \frac{4}{n} \hat{J}^\prime \hat{M}^{-1/2}\hat{\mathcal{K}}_{NM} \hat{M}^{-1/2} Z^\prime \epsilon R^\prime Z \hat{M}^{-1} \hat{J} \notag \\
			&+ \frac{4}{n} \left(\hat{J}^\prime \hat{M}^{-1/2}\hat{\mathcal{K}}_{NM} \hat{M}^{-1/2} Z^\prime \epsilon R^\prime Z \hat{M}^{-1} \hat{J}\right)'+\frac{4}{n} \hat{J}^\prime \hat{M}^{-1} Z^\prime R R^\prime Z \hat{M}^{-1} \hat{J}
		\end{align}
		where $\tilde{\Phi}=Z^\prime \tilde{\Sigma} Z/n$, with $\tilde{\Sigma}$ being an $n\times n$  block-diagonal matrix such that its $g$-th block $\tilde{\Sigma}_g$ has elements   $\tilde{\Sigma}_{gij}=\epsilon_{ig}\epsilon_{jg}$. From (\ref{H}), we write
		\begin{align}\label{H_conv}
			\left \Vert \Delta^{\hat{H}}_H \right \Vert  \leq & \left\Vert \hat{J}^\prime \hat{M}^{-1/2} \hat{\mathcal{K}}_{NM} \hat{M}^{-1/2} \tilde{\Phi}\hat{M}^{-1/2} \hat{\mathcal{K}}_{NM} \hat{M}^{-1/2} \hat{J} \right.\notag\\
			& \left.- J^\prime M^{-1/2} \mathcal{K}_{NM} M^{-1/2} \Phi M^{-1/2}\mathcal{K}_{NM} M^{-1/2} J \right\Vert \notag \\
			+& 2\left\Vert \frac{4}{n} \hat{J}^\prime \hat{M}^{-1/2}\hat{\mathcal{K}}_{NM} \hat{M}^{-1/2} Z^\prime \epsilon R^\prime Z \hat{M}^{-1} \hat{J} \right\Vert + \left\Vert \frac{4}{n} \hat{J}^\prime \hat{M}^{-1} Z^\prime R R^\prime Z \hat{M}^{-1} \hat{J}
			\right \Vert.
		\end{align}
		By standard algebra, the first term in (\ref{H_conv}) is bounded by
		\begin{align}
			\begin{split}
				&\left\Vert \hat{J}^\prime \hat{M}^{-1} \tilde{\Phi} \hat{M}^{-1} \hat{J} -  J^\prime M^{-1} \Phi M^{-1} J \right\Vert \notag \\+ &\left\Vert \hat{J}^\prime \hat{M}^{-1} \hat{N} \left(\hat{N}^\prime \hat{M}^{-1} \hat{N}\right)^{-1} \hat{N}^\prime \hat{M}^{-1} \tilde{\Phi}\hat{M}^{-1}\hat{J} - J^\prime M^{-1} N \left(N^\prime M^{-1} N\right)^{-1} N^\prime M^{-1} \Phi M^{-1} J \right\Vert \notag \\
				+& \left\Vert \hat{J}^\prime \hat{M}^{-1} \hat{N} \left(\hat{N}^\prime \hat{M}^{-1} \hat{N}\right)^{-1} \hat{N}^\prime \hat{M}^{-1} \tilde{\Phi}\hat{M}^{-1} \hat{N} \left(\hat{N}^\prime \hat{M}^{-1} \hat{N} \right)^{-1} \hat{N} \hat{M}^{-1}\hat{J} \right. \notag \\- & \left.  J^\prime M^{-1} N \left(N^\prime M^{-1} N\right)^{-1} N^\prime M^{-1} \Phi M^{-1} N \left(N^\prime M^{-1} N \right)^{-1} N M^{-1}J \right \Vert .
			\end{split}
		\end{align}
		\normalsize
		We provide details for the first term in the last displayed expression, the others following similarly. Specifically,
		\begin{align}\label{H_conv_first_1}
			&\left\Vert \hat{J}^\prime \hat{M}^{-1} \tilde{\Phi} \hat{M}^{-1} \hat{J} -  J^\prime M^{-1} \Phi M^{-1} J \right\Vert  \leq  \left \Vert\Delta^ {\hat{J}}_J \right\Vert \left\Vert \hat{M}^{-1}\right\Vert ^2 \left\Vert \tilde{\Phi}\right\Vert  \left\Vert \hat{J}\right\Vert \notag \\+& \left \Vert J \right\Vert \left\Vert \Delta^{\hat{M}^{-1}}_{M^{-1}}\right\Vert \left\Vert \tilde{\Phi}\right\Vert \left\Vert \hat{M}^{-1}\right\Vert \left\Vert \hat{J}\right\Vert + \left \Vert J \right\Vert \left\Vert M^{-1}\right\Vert \left\Vert \Delta^{\tilde{\Phi}}_\Phi \right\Vert \left\Vert \hat{M}^{-1}\right\Vert \left\Vert \hat{J}\right\Vert \notag \\
			+& \left \Vert J \right\Vert \left\Vert M^{-1}\right\Vert \left\Vert \Phi\right\Vert \left\Vert \Delta^{\hat{M}^{-1}}_{M^{-1}}\right\Vert \left\Vert \hat{J}\right\Vert + \left \Vert J \right\Vert \left\Vert M^{-1}\right\Vert^2 \left\Vert \Phi\right\Vert \left\Vert \Delta^{\hat{J}}_{J}\right\Vert.
		\end{align}
		Under Assumptions  \ref{ass:eigsandinsts} and \ref{ass:Mhat}, most terms can be handled as in the proof of Theorem \ref{theorem:dhatd}, and $\left\Vert \Delta^{\hat{M}^{-1}}_{M^{-1}}\right\Vert=O_p(p/\sqrt{n})$ and $\left\Vert \Delta^{\hat{J}}_{J}\right\Vert=O_p(p/\sqrt{n})$. Focusing instead on $\left\Vert \Delta^{\tilde{\Phi}}_\Phi \right \Vert$, observe that
		\begin{equation}\label{omega_terms}
			\left\Vert \Delta^{\tilde{\Phi}}_\Phi \right \Vert \leq \left\Vert \Delta^{\tilde{\Phi}}_{\bar{\Phi}}\right\Vert + \left\Vert\Delta^{\bar{\Phi}}_{\Phi} \right\Vert,
		\end{equation}
		with $\bar{\Phi}= Z^\prime \Sigma Z /n$. The first term in (\ref{omega_terms}) is $\left\Vert Z^\prime \Delta^{\tilde{\Sigma}}_\Sigma Z/n \right\Vert$, where the $m\times m$ matrix $Z^\prime \Delta^{\tilde{\Sigma}}_\Sigma Z/n$ has typical element $n^{-1}\sum_{i,j=1}^n\sum_{g=1}^Gt_{ijrsg}\left(\epsilon_{ig}\epsilon_{jg}-\sigma_{ijg}\right)$ where we write $t_{ijrsg}=1_g(i,j)z_{irg}z_{jsg}$. This typical element has zero mean and variance
		\begin{align}
			2n^{-2}\sum_{i,j=1}^n\sum_{g=1}^G t^2_{ijrsg}\sigma^2_{ig}\sigma^2_{jg}+n^{-2}\sum_{i=1}^n\sum_{g=1}^G t^2_{iirsg}\left (\mathbb{E}\epsilon^4_{ig}-3\sigma^4_{ig}\right)=O\left(n^{-2}\sum_{g=1}^G n^2_g\right)=O\left(n^{-1}\right),
		\end{align}
		under Assumptions \ref{ass:errors} and \ref{ass:eigsandinsts} and, since, $m\sim p$ we therefore $ \left\Vert \Delta^{\tilde{\Phi}}_{\bar{\Phi}}\right\Vert =O_p(p/\sqrt{n})$.
		The matrix in the norm in the second term in (\ref{omega_terms}) has mean zero and the second moment of its squared Euclidean norm is bounded by 
$K\xi m^2/n^2= O\left({p^2}/{n}\right)$
		under Assumptions  \ref{ass:errors}, \ref{ass:eigsandinsts} and \ref{ass:Mhat}, rendering it $O_p(p/\sqrt{n})$ and thus $\left\Vert \Delta^{\tilde{\Phi}}_{{\Phi}}\right\Vert= O_p(p/\sqrt{n})$. 
		We then conclude (\ref{H_conv_first_1}) is $O_p\left(p/\sqrt{n}\right)$. Similar steps yield that the first term on the RHS of  (\ref{H_conv}) is $O_p\left(p/\sqrt{n}\right)$. 
		
		By similar arguments to those that led to (\ref{d_remainder}), under Assumptions \ref{ass:eigsandinsts} and \ref{ass:Mhat}, the second term in (\ref{H_conv}) is bounded by
		\begin{equation}
			K \left\Vert \frac{1}{n}Z^\prime \epsilon \right\Vert \left\Vert Z\right\Vert \left\Vert R\right\Vert =O_p\left(\frac{\sqrt{n}}{p^{\nu-1/2}} \right),
		\end{equation}
		which is negligible compared to the first term in (\ref{H_conv}) since $n/p^{(\nu+1/2)}=o(1)$, under Assumptions \ref{ass:eigsandinsts} and  \ref{ass:Mhat}.
		Similarly, the third term in (\ref{H_conv}) is $O_p(n p^{-2\nu})$, which is negligible compared to the first term since $n^{3/2}/{p^{2\nu+ 1}}=o(1)$ as $n\rightarrow \infty$, under Assumptions \ref{ass:eigsandinsts} and \ref{ass:Mhat}. 
		We conclude that
		\begin{equation}
			\left\Vert \Delta^{\hat{H}}_H\right\Vert = O_p\left(p/\sqrt{n}\right).
		\end{equation}
		By Assumption \ref{ass:Mhat}, the last term in (\ref{theorem2_1}) is thus $O_p(p^2/n^{3/2})$, given $\Vert d\Vert=O_p(\sqrt{p/n})$ and  $\left\Vert\hat{d}\right\Vert=O_p(\sqrt{p/n})$. Hence, the second term in (\ref{theorem2_1}) is $o_p(\sqrt{p}/n)$ as long as $p^3/n=o(1)$, concluding the proof.

\end{proof}
\subsection{Proofs of main theorems}
\begin{proof}[Proof of Theorem \ref{theorem:nulldist}:]
 We have $\mathbb{E}\epsilon_{ig}\epsilon_{jg'}=0$ for $g\neq g'$ and $\mathbb{E}\epsilon_{ig}\epsilon_{jg}=\sum_{r=1}^{n_g}b_{ir}b_{jr}$. Thus $\sigma_{ijg}=\sum_{r=1}^{n_g}b_{ir}b_{jr}$ for $i\neq j$ and $\sigma^2_{ig}=\sum_{r=1}^{n_g}b^2_{ir}$. Because $n_g$ are finite and fixed, clearly only a finite number of the $b_{ir}$ are non-zero for any given $i$ or $r$. Thus we have $\sup_{r=1,\ldots,n}\sum_{i=1}^n \left\vert b_{ir}\right\vert+\sup_{i=1,\ldots,n}\sum_{r=1}^n \left\vert b_{ir}\right\vert<K$. Upon writing $\epsilon=B\eta$, where $\eta$ is an $n\times 1$ vector with elements $\eta_r$ and $B$ is an $n\times n$ matrix with elements $b_{ir}$, we observe that 
\begin{equation}\label{d_linpro}
d=-\frac{2}{n}J'M^{-1/2}\mathcal{K}_{NM}M^{-1/2}Z'B\eta.
\end{equation}
In view of Theorem \ref{theorem:approx}, we know that $\mathcal{S} - \frac{n d^\prime H^{-1} d - q}{\sqrt{2q}} = o_p(1)$ and therefore it is sufficient to show that $\frac{n d^\prime H^{-1} d - q}{\sqrt{2q}}\overset{d}{\rightarrow} N(0,1)$, which by (\ref{d_linpro}) boils down to showing that
\begin{equation}\label{clt_target}
\frac{n \eta' B'\mathscr{G}B \eta - q}{\sqrt{2q}}\overset{d}{\rightarrow} N(0,1),
\end{equation}
where 
\[
\mathscr{G}=\frac{4}{n^2}ZM^{-1/2}\mathcal{K}_{NM}M^{-1/2}JH^{-1}J'M^{-1/2}\mathcal{K}_{NM}M^{-1/2}Z'=\frac{4}{n}Z\mathscr{A}Z', 
\]
say, where $\mathscr{A}=n^{-1}M^{-1/2}\mathcal{K}_{NM}M^{-1/2}JH^{-1}J'M^{-1/2}\mathcal{K}_{NM}M^{-1/2}$. Theorem A.1 of \cite{Guptaetal2025} applies if
\begin{equation}\label{clt_eigs}
\eigbig{\mathscr{G}}=O_p(i) \text{ and } \left(\eigsmall{\mathscr{G}}\right)^{-1}=O_p(1),
\end{equation}
and
\begin{equation}\label{clt_sum1}
g_{ij}=O_p(p/n)\text{ and } \sum_{i=1}^n g^2_{ij}=O_p(p/n),
\end{equation}
uniformly in $i$ and $j$. The conditions in Assumption \ref{ass:eigsandinsts} ensure that (\ref{clt_eigs}) holds. To check (\ref{clt_sum1}),
observe that 
\[
g_{ij}=\frac{4}{n}z_i'\mathscr{A}z_j=O_p\left({{\left\Vert z_i\right\Vert}{\left\Vert z_j\right\Vert}}/{n}\right)=O_p(p/n)
\]
and  
\[
\sum_{j=1}^ng^2_{ij}=\frac{16}{n^2}z_i'\mathscr{A}\left(\sum_{j=1}^n z_jz_j'\right)\mathscr{A}z_i=\frac{16}{n}z_i'\mathscr{A}\hat{M}\mathscr{A}z_i=O_p\left({{\left\Vert z_i\right\Vert}{\left\Vert z_j\right\Vert}}/{n}\right)=O_p(p/n),
\]
as desired. Then (\ref{clt_target}) follows by Theorem A.1 of \cite{Guptaetal2025}.
\end{proof}

\begin{proof}[Proof of Theorem \ref{theorem:consistency}:]

Let $\gamma=(\mu^\prime, \beta^\prime)^\prime$. Corresponding to $\tilde{d}= \partial \mathcal{Q}/\partial \gamma$ defined in (\ref{d}) under $\mathcal{H}_{0A}$,  we now define the unconstrained gradient vector, $(q+k)\times 1$, $\tilde{d}_{U}$ as 
		\begin{equation}\label{full_d_H0}
			\tilde{d}_{U}(\mu,  \beta, y)=   -\frac{2}{n} U^\prime   \mathcal{P}_Z \left( y-\Upsilon_{f,g}\mu-X\beta\right),
		\end{equation}
		where $\tilde{d}_U(0_{q\times 1}, \beta, y)= \tilde{d}$ defined in (\ref{d}).
		
		We partition $\hat{J}= (\hat{\Xi}, \hat{N})$, where $\hat{\Xi}$ and $\hat{N}$ are $m\times q$ and $m\times k$, respectively, with a similar partition for its expected value $J= (\Xi, N)$. Also, we define the $(q+k) \times (q+k)$ matrix $\hat{D}= \partial ^2\mathcal{Q}/\partial \gamma\partial \gamma^\prime$,  such that the first $q\times q$ block is given by
		\begin{equation}
			\hat{D}_{11}= \frac{2}{n} \Upsilon_{f,g}'  \mathcal{P}_Z \Upsilon_{f,g} = 2 \hat{\Xi}^\prime \hat{M}^{-1} \hat{\Xi}
		\end{equation}
		the block 1-2 (or the transposed of 2-1 block) is the $q\times k$ matrix
		\begin{equation}
			\hat{D}_{12} =\hat{D}_{21}^{\prime}= \frac{2}{n} \Upsilon_{f,g}'  \mathcal{P}_Z X   = 2 \hat{\Xi}^{\prime}\hat{M}^{-1} \hat{N}
		\end{equation}
		and the 2-2 block is the $k  \times k$ matrix
		\begin{equation}
			\hat{D}_{22} = \frac{2}{n} X'  \mathcal{P}_Z X = 2 \hat{N}^\prime \hat{M}^{-1} \hat{N}.
		\end{equation}
		Under Assumption \ref{ass:eigsandinsts}, $\Vert \hat{D} \Vert =O_p(1)$ and $\liminf_{n\rightarrow\infty}\underline{{eig}}\left(\hat{D} \right) >0$ with inverse defined and partitioned in the usual way. Also, $\hat{D}$ does not depend on any unknowns. In line we our previous notation, we also define the corresponding limit quantities as $D_{11}= 2 \Xi^{\prime} M^{-1} \Xi$, $D_{12}= D_{21}^\prime= \Xi^\prime M^{-1} N$ and $D_{22}= 2 N^\prime M^{-1} N$.
		
		From standard algebra, by the mean value theorem (MVT), given $\hat{d}$ in (\ref{dhat}),
		\begin{align}
			& \hat{d}_p= \left .\frac{\partial \mathcal{Q}}{\partial \mu^\prime}\right\vert _{(0_{1 \times q}, \hat{\beta}^\prime)^\prime} = \left. \frac{\partial \mathcal{Q}}{\partial \mu^\prime}\right\vert_{(0_{1\times q}, \beta_0^\prime)^\prime}+ \hat{D}_{12}(\hat{\beta}- \beta_0)\notag \\
			& 0= \left .\frac{\partial \mathcal{Q}}{\partial \beta^\prime}\right\vert _{(0_{1 \times q}, \hat{\beta}^\prime)^\prime} = \left. \frac{\partial \mathcal{Q}}{\partial \beta^\prime}\right\vert_{(0_{1\times q}, \beta_0^\prime)^\prime}+ \hat{D}_{22}(\hat{\beta}- \beta_0)
		\end{align}
		Thus,
		\begin{align}
			\hat{d}_p= \left(I_q ; \  -\hat{D}_{12}\hat{D}_{22}^{-1}\right)\left.\begin{pmatrix}\frac{\partial \mathcal{Q}}{\partial \mu^\prime} \\ \frac{\partial \mathcal{Q}}{\partial \beta^\prime} \end{pmatrix}\right\vert_{(0_{1\times q}, \beta_0^\prime)^\prime} =&  \left(I_q ; \  -\hat{D}_{12}\hat{D}_{22}^{-1}\right) \tilde{d}_{U}(0_{q\times 1}, \beta_0) \notag \\
			=& \left(I_q ; \  -\hat{D}_{12}\hat{D}_{22}^{-1}\right) \tilde{d}( \beta_0)
		\end{align}
		according to the definition in (\ref{full_d_H0}) and (\ref{d}), and with $I_q$ denoting the $q\times q$ identity matrix. Hence, given $\hat{H}$ in (\ref{Hhat}),
		\begin{equation}\label{equiv_th4}
			n\hat{d}_p^\prime \hat{H}^{11} \hat{d}_p = n \tilde{d}_U(0_{q\times 1}, \beta_0)^\prime \hat{\mathcal{V}}\tilde{d}_{U}(0_{q\times 1}, \ \beta_0),
		\end{equation}
		with 
		\begin{align}\label{italic_V}
			\hat{\mathcal{V}}=& \begin{pmatrix} I_{q} \\ - \hat{D}_{22}^{-1} \hat{D}_{21}\end{pmatrix} \hat{H}^{11} \left(I_q\ ; \ - \hat{D}_{12}\hat{D}_{22}^{-1} \right). 
		\end{align}
		
		\noindent Thus, 
		\begin{align}\label{stat1}
			n \hat{d}^\prime \hat{H}^{-1}\hat{d}= n\hat{d}_p^\prime \hat{H}^{11} \hat{d}_p = n \tilde{d}_{U}(0_{q\times 1}, \ \beta_0)^\prime \hat{\mathcal{V}} \tilde{d}_{U}(0_{q\times 1}, \ \beta_0) 
		\end{align}
		
		\noindent However,  under $\mathcal{H}_{1A}$, $\tilde{d}_{U}(0_{q\times 1}, \ \beta_0)$  is no longer evaluated at the true parameter value as $\mu_0 \neq 0$.  By MVT around $\mu_0$, we can write
		\begin{equation}\label{d_H1}
			\tilde{d}_U(0_{q\times 1}, \ \beta_0)= \tilde{d}_U(\mu_0,\ \beta_0)  - \frac{\partial \tilde{d}_U(\bar{\mu}, \ \beta_0)}{\partial \mu} \mu_0 \equiv \tilde{d}_U(\mu_0,\ \beta_0)   + \tau,
		\end{equation}
		with $\bar{\mu}$ being  intermediate point such that $\Vert \bar{\mu}-\mu_0 \Vert \leq \Vert \mu_0\Vert$ and $\tau$ being the $q+k \times 1$ vector defined as
		\begin{equation}
			\tau=	-\frac{\partial \tilde{d}_U(\bar{\mu}, \ \beta_0)}{\partial \mu} \mu_0= \frac{2}{n}U^\prime  \mathcal{P}_Z \Upsilon_{f,g}\mu_0 =  \hat{J}^\prime \hat{M}^{-1}   \hat{\Xi} \mu_0.
		\end{equation}
		Similarly to (\ref{theorem1_1}) and (\ref{d_remainder}), 
		\begin{align}
			\Vert \tilde{d}_U(\mu_0,\ \beta_0)\Vert \leq & K \Vert\hat{J} \Vert \Vert \hat{M}^{-1}\Vert \left \Vert \frac {1}{n} Z^\prime \epsilon \right\Vert +  K \Vert\hat{J} \Vert \Vert \hat{M}^{-1}\Vert \left \Vert \frac {1}{n} Z^\prime R \right\Vert \notag \\= &O_p\left(\max\left(\sqrt{\frac{p}{n}}, p^{-\nu} \right)\right) = O_p \left(\sqrt{\frac{p}{n}}\right)
		\end{align}
		for $\nu$ satisfying $\sqrt{n}/p^{\nu+1/2}=o(1)$, which holds under Assumption \ref{ass:eigsandinsts}, and $\Vert \tau \Vert=O_p(1)$ and non-zero, since $\mu_0\neq 0$. 

		We furthermore define the unconstrained version of $\hat{H}$ evaluated at generic parameters' value as
		\begin{equation}
			\tilde{H}_U(\mu,  \beta)= 4 \hat{J}^\prime \hat{M}^{-1} \tilde{\Omega}_U(\mu,  \beta) \hat{M}^{-1} \hat{J},  
		\end{equation}
		partitioned in the usual way, where $\tilde{\Omega}_U$ is defined according to (\ref{epsilonU_def}). We also define its limit quantity $H_U(\mu_0, \ \beta_0)= 4 J^\prime M^{-1} \Omega M^{-1} J$, where, as previously defined, $\Omega= n^{-1}\mathbb{E}(Z^\prime \Sigma Z)$ and $\Sigma$ is the $n\times n$ block-diagonal matrix with $n_g\times n_g$ diagonal block given by $\Sigma_g, g=1,\ldots,G$.
		Similar to earlier calculations in the proof of Theorem \ref{theorem:approx}, under Assumptions \ref{ass:eigsandinsts}-\ref{ass:power}, $\Vert \tilde{H}_U (\mu,  \beta)\Vert = O_p(1)$, uniformly in $(\mu,  \beta)$ and $\liminf_{n\rightarrow\infty}\underline{\textit{eig}}(\tilde{H}_U(\mu,  \beta))> c>0$, uniformly in $(\mu,  \beta)$ and almost surely. 
		
		Clearly, $\hat{H}= \tilde{H}_U(0,\hat{\beta})$. We can apply the MVT to $\hat{H}^{-1}$ around the true parameters' value and obtain
		\begin{align}\label{H_H1}
			\hat{H}^{-1}&= \tilde{H}_U^{-1}(\mu_0, \beta_0)+  \sum_{j=1}^p \tilde{H}_U^{-1}(\bar{\mu}, \bar{\beta}) \frac{\partial \tilde{H}_U}{\partial{\mu_j}}\vert_{(\bar{\mu}, \bar{\beta})} \tilde{H}_U^{-1}(\bar{\mu}, \bar{\beta}) \mu_{0j} \notag \\
			&- \sum_{t=1}^k \tilde{H}_U^{-1}(\bar{\mu}, \bar{\beta}) \frac{\partial \tilde{H}_U}{\partial{\beta_t}}\vert_{(\bar{\mu}, \bar{\beta})} \tilde{H}_U^{-1}(\bar{\mu}, \bar{\beta}) (\hat{\beta}_t - \beta_{0t}) 
			\equiv  \tilde{H}_U^{-1}(\mu_0, \beta_0) + T,
		\end{align}
		where $\bar{\mu}$ and $\bar{\beta}$ are intermediate points such that $\Vert \bar{\mu}- \mu_0\Vert \leq \Vert \mu_0\Vert$ and $\Vert \bar{\beta}- \beta_0 \Vert \leq \Vert \hat{\beta}- \beta_0 \Vert$.  Under $\mathcal{H}_{0A}$, $\Vert T\Vert = O_p(\sqrt{p/n}).$ Under $\mathcal{H}_{1A}$, $\mu_{0j} \neq 0$ for some $j=1,\ldots,q$ and, since  $\hat{\beta}_{t}$ for $t=1,\ldots,k$ are restricted estimates, $\hat{\beta}_t - \beta_{0t}=O_p(1)$ for some $t=1,\ldots,k$. Thus, under Assumptions \ref{ass:eigsandinsts}-\ref{ass:power}, $\Vert T \Vert =O_p(p)$ and $\liminf_{n\rightarrow\infty}\underline{\textit{eig}}(T)> c>0$. By partitioning $T$ in the usual way, we obtain $\hat{H}^{11} = \tilde{H}_U^{11}(\mu_0, \ \beta_0) + T_{11}$.  Also, let 
		\begin{equation}\label{V_def}
			\tilde{\mathcal{V}}(\mu_0, \ \beta_0)= \begin{pmatrix} I_q \\ - \hat{D}_{22}^{-1} \hat{D}_{21}\end{pmatrix} \tilde{H}_U^{11}(\mu_0, \ \beta_0) \left(I_q\ ; \ - \hat{D}_{12}\hat{D}_{22}^{-1} \right)
		\end{equation}
		and
		\begin{equation}\label{W_limit}
			\tilde{\mathcal{W}}= \begin{pmatrix} I_q \\ - \hat{D}_{22}^{-1} \hat{D}_{21}\end{pmatrix} T_{11} \left(I_q\ ; \ - \hat{D}_{12}\hat{D}_{22}^{-1} \right).
		\end{equation}
		
		
		\noindent From (\ref{d_H1}) and (\ref{H_H1}), (\ref{stat1}) becomes
		\begin{align}
			n\hat{d}_p^\prime \hat{H}^{11} \hat{d}_p =& n \tilde{d}_{U}(\mu_0, \ \beta_0)^\prime \tilde{\mathcal{V}}(\mu_0, \ \beta_0) \tilde{d}_{U}(\mu_0, \ \beta_0) + 2n \tau^\prime \tilde{\mathcal{V}}(\mu_0, \ \beta_0) \tilde{d}_{U}(\mu_0, \ \beta_0) \notag \\
			+& n \tau^\prime \tilde{\mathcal{V}}(\mu_0, \ \beta_0) \tau + n \tilde{d}_{U}(\mu_0, \ \beta_0)^\prime \tilde{\mathcal{W}} \tilde{d}_{U}(\mu_0, \ \beta_0) \notag\\
			&+ 2n \tau^\prime \tilde{\mathcal{W}} \tilde{d}_{U}(\mu_0, \ \beta_0) 
			+ n \tau^\prime \tilde{\mathcal{W}} \tau,
		\end{align}
		and thus
		\begin{align}\label{statistic_h1}
			\frac{n\hat{d}_p^\prime \hat{H}^{11} \hat{d}_p -q}{(2q)^{1/2}}&= \frac{n \tilde{d}_{U}(\mu_0, \ \beta_0)^\prime \tilde{\mathcal{V}}(\mu_0, \ \beta_0) \tilde{d}_{U}(\mu_0, \ \beta_0)-q}{(2q)^{1/2}}\notag\\
			& + \frac{\sqrt{2}n}{\sqrt{q}}\tau^\prime \tilde{\mathcal{V}}(\mu_0, \ \beta_0) \tilde{d}_{U}(\mu_0, \ \beta_0) \notag \\
			&+  \frac{n}{\sqrt{2q}}\tau^\prime \tilde{\mathcal{V}}(\mu_0, \ \beta_0) \tau + \frac{n}{\sqrt{2q}}\tilde{d}_{U}(\mu_0, \ \beta_0)^\prime \tilde{\mathcal{W}} \tilde{d}_{U}(\mu_0, \ \beta_0)\notag\\ 
			&+ \frac{\sqrt{2} n}{\sqrt{q}}\tau^\prime \tilde{\mathcal{W}} \tilde{d}_{U}(\mu_0, \ \beta_0) + \frac{ n}{\sqrt{2q}} \tau^\prime \tilde{\mathcal{W}} \tau 
		\end{align}
		By a similar argument adopted in the proof of Theorem \ref{theorem:dhatd}, we can show $\Vert \tilde{d}_U(\mu_0, \ \beta_0) - d_U \Vert = O_p(p^{3/2}/n)$, with $d_U= -2/n J^\prime M^{-1} Z^\prime \epsilon$ and $d_p= (I_q; -D_{12} D_{22}^{-1}) d_U $. Also, we can show
		\begin{equation}\label{H_U convergence}
			\Vert \tilde{H}_U(\mu_0, \ \beta_0) - H_U \Vert = O_p\left(\frac{p}{\sqrt{n}}\right),
		\end{equation}
		such that, under Assumptions \ref{ass:eigsandinsts}-\ref{ass:power},  $\Vert \tilde{H}^{11}_U(\mu_0, \ \beta_0) - H_U^{11} \Vert = O_p\left(p/\sqrt{n}\right)$.
		We show the claim in (\ref{H_U convergence}) by routine arguments as in (\ref{H_conv_first_1}) and (\ref{omega_terms}), after observing that 
		$\tilde{H}_U(\mu_0, \ \beta_0) = 4 \hat{J}^{\prime} \hat{M}^{-1} \tilde{\Phi}_R \hat{M}^{-1} \hat{J}$, with $\tilde{\Phi}_R = \tilde{\Phi} + \sum_{i=1}^{n} z_i z_i^\prime R_i^2 /n$, and 
		\begin{align}
			&\left\Vert \Delta^{\tilde{\Phi}_R}_\Phi \right\Vert \leq \left\Vert \Delta^{\tilde{\Phi}}_\Phi \right\Vert + \left\Vert \frac{\sum_{i=1}^n z_i z_i^\prime R_i^{2}}{n}\right\Vert
			=  O_p\left(\frac{p}{\sqrt{n}}\right) + \underset{1\leq i\leq n}{\sup} R_i^2 \ \Vert \hat{M} \Vert  \notag \\&=  O_p\left(\frac{p}{\sqrt{n}}\right) + O_p(p^{-2\nu})= O_p\left(\frac{p}{\sqrt{n}}\right),
		\end{align}
		where the last equality follows for $\nu$ satisfying $\sqrt{n}/p^{2\nu+1}=o(1)$, which holds under Assumption \ref{ass:eigsandinsts}.

		After showing, similarly to steps in the proof of Theorem \ref{theorem:approx}, that
		\begin{equation}
			\tilde{d}_{U}(\mu_0, \ \beta_0)^\prime \tilde{\mathcal{V}}(\mu_0, \ \beta_0) \tilde{d}_{U}(\mu_0, \ \beta_0) -  d_{p}^\prime {H}_U^{11}d_{p}=o_p\left(\frac{\sqrt{p}}{n}\right), 
		\end{equation}
		we conclude that the first term in (\ref{statistic_h1}) is $O_p(1)$, as shown in Theorem  \ref{theorem:nulldist}. 
		By standard norm inequalities, the second term in (\ref{statistic_h1}) is $O_p(\sqrt{n})$, the third is $O_p(n/\sqrt{p})$, the fourth is $O_p(p^{3/2})$,  the fifth is $O_p(p\sqrt{n})$ and  the sixth is $O_p(n\sqrt{p})$. The last term dominates the former five ones and thus, under $\mathcal{H}_{1A}$, for all $\eta>0$, $\mathbb{P}\left(|\mathcal{S}|^{-1} \leq \eta/n \sqrt{p} \right) \rightarrow 1$ as $n\rightarrow \infty $ and hence consistency of $\mathcal{S}$ follows.
	\end{proof}

\renewcommand{\thesection}{D}
\setcounter{lemma}{0} \renewcommand{\thelemma}{D\arabic{lemma}}
\renewcommand{\theequation}{D.\arabic{equation}}
\setcounter{equation}{0}

\section{Auxiliary lemmas}\label{appendix:lemmas}

\begin{lemma}\label{lemma:Mhat} 
	Let $p^2/n\rightarrow 0$ as $n\rightarrow\infty$ and suppose that Assumptions \ref{ass:regressors}-\ref{ass:Mhat} hold with $\nu>3/2$. Then, as $n\rightarrow \infty$, 
	\begin{equation}\label{MhatJhat}
		\left\Vert \hat{M} - M \right\Vert =O_p\left(\frac{p}{\sqrt{n}}\right), 		\left\Vert \hat{J} - J \right\Vert =O_p\left(\frac{p}{\sqrt{n}}\right).
	\end{equation}	
\end{lemma}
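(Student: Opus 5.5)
The plan is to bound the spectral norms by Frobenius norms and control the latter in mean square, followed by Markov's inequality. Since $\Vert A\Vert\le\Vert A\Vert_F$ for any matrix, it suffices to show
\[
\mathbb{E}\left\Vert \hat{M}-M\right\Vert_F^2=O\left(\frac{p^2}{n}\right),\qquad \mathbb{E}\left\Vert \hat{J}-J\right\Vert_F^2=O\left(\frac{p^2}{n}\right).
\]
Markov's inequality then delivers both rates in (\ref{MhatJhat}).

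For $\hat{M}$, write the $(l,k)$ element of $\hat{M}-M$ as $n^{-1}\sum_{i=1}^n\left(z_{il}z_{ik}-\mathbb{E}z_{il}z_{ik}\right)$. Its variance is
\[
n^{-2}\sum_{i=1}^n var(z_{il}z_{ik})+n^{-2}\sum_{i\neq j}cov(z_{il}z_{ik},z_{jl}z_{jk}).
\]
The first sum is $O(n^{-1})$, because $var(z_{il}z_{ik})\le \mathbb{E}z_{il}^2z_{ik}^2\le(\mathbb{E}z_{il}^4\,\mathbb{E}z_{ik}^4)^{1/2}\le K$ by Cauchy--Schwarz and the fourth moment bounds in Assumptions \ref{ass:regressors} and \ref{ass:eigsandinsts}. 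The second sum is at most $n^{-2}\xi=O(n^{-1})$ by (\ref{delta_cond}) in Assumption \ref{ass:Mhat}. Both bounds hold uniformly in $(l,k)$. Summing over the $m^2$ entries gives $\mathbb{E}\Vert\hat M-M\Vert_F^2=O(m^2/n)$. Since $m\geq q+k$ and the instruments are built from $X_1$, $WX_1$, the $k_2$ excluded instruments and the $p k_1$ series terms, we have $m\sim p$, so this is $O(p^2/n)$.

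The argument for $\hat{J}=n^{-1}Z'U$ is identical, with entries $n^{-1}\sum_i(z_{il}u_{ir}-\mathbb{E}z_{il}u_{ir})$. The diagonal terms are bounded by $(\mathbb{E}z_{il}^4\,\mathbb{E}u_{ir}^4)^{1/2}\le K$, using the $u$ fourth-moment condition of Assumption \ref{ass:eigsandinsts}. The off-diagonal covariance sum is controlled by $\varkappa=O(n)$. There are $m(q+k)=O(p^2)$ entries, since $q=p(l+1)$ with $l$ fixed, and this yields the same $O(p^2/n)$ bound.

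The main obstacle is uniformity. The moment and covariance bounds must hold uniformly over the growing index set (at most $m$ instruments and $q+k$ columns of $U$), so that the entrywise bound $O(n^{-1})$ can be summed over $O(p^2)$ entries without picking up extra $p$-dependence. Assumptions \ref{ass:eigsandinsts} and \ref{ass:Mhat} are stated with suprema over these indices and $K$ is independent of $p$, so this goes through. One point needs care: the columns of $U$ include $\psi_i(w_j'y)$, which are random and may be correlated with $\epsilon$. This causes no difficulty here, because only the centred sample moment of $z_{il}u_{ir}$ enters and no independence from $\epsilon$ is used. The conditions $p^2/n\to0$ and $\nu>3/2$ are not needed for the rate itself; they only ensure that the bounds are $o_p(1)$, so that (\ref{ass_6_a}) transfers to $\hat M$ with probability approaching one, as used in Theorems \ref{theorem:dhatd} and \ref{theorem:approx}.
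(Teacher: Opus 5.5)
Your proposal is correct and takes essentially the paper's route. The paper only cites Lemma 1 of \cite{Guptaetal2024}, and your steps are the standard argument behind that result: bound the spectral norm by the Frobenius norm, bound each entry's variance by the fourth-moment terms plus $\xi/n^2$ (resp.\ $\varkappa/n^2$), sum over the $O(p^2)$ entries, and apply Markov's inequality; this is also the device the paper itself uses for $\bar{\Phi}-\Phi$ in the proof of Theorem \ref{theorem:approx}. You are right that the rate uses neither $p^2/n\rightarrow 0$ nor $\nu>3/2$, and reading the fourth-moment bound on $z$ as covering all $m$ instrument columns is the intended reading, since Assumption \ref{ass:regressors} literally indexes $s=1,\ldots,l$.
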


\begin{proof}[Proof of Lemma \ref{lemma:Mhat}:]
This is Lemma 1 in \cite{Guptaetal2024}.
\end{proof}
\begin{lemma}\label{lemma:Omegaeigs}
	Under Assumptions \ref{ass:errors} and \ref{ass:eigsandinsts}, 
	\[
	\limsup_{n\rightarrow\infty}\eigbig(\Phi)<\infty \text{ and } \liminf_{n\rightarrow\infty}\eigsmall(\Phi)>0.
	\]
\end{lemma}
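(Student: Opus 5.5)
The plan is to sandwich the eigenvalues of $\Phi=n^{-1}\mathbb{E}(Z'\Sigma Z)$ between those of $M=n^{-1}\mathbb{E}(Z'Z)$, scaled by the extreme eigenvalues of the cluster blocks $\Sigma_g$. The starting point is to write $\Phi$ as a quadratic form. For any fixed unit vector $a\in\mathbb{R}^m$, define the random $n\times 1$ vector $v=Za$ and partition it conformably with the clusters as $v=(v_1',\ldots,v_G')'$. Then
\[
a'\Phi a=n^{-1}\mathbb{E}\left(v'\Sigma v\right)=n^{-1}\mathbb{E}\sum_{g=1}^G v_g'\Sigma_g v_g .
\]
Because $\Sigma$ is block diagonal, each summand is controlled by the spectrum of its own block. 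Deterministically, for every $g$,
\[
\eigsmall(\Sigma_g)\,\Vert v_g\Vert^2\le v_g'\Sigma_g v_g\le \eigbig(\Sigma_g)\,\Vert v_g\Vert^2 .
\]
Summing over $g$ and using $\sum_g\Vert v_g\Vert^2=a'Z'Za$ gives, pointwise,
\[
\inf_g\eigsmall(\Sigma_g)\,a'Z'Za\le v'\Sigma v\le \sup_g\eigbig(\Sigma_g)\,a'Z'Za .
\]
Taking expectations and dividing by $n$ then yields
\[
\inf_g\eigsmall(\Sigma_g)\,a'Ma\le a'\Phi a\le\sup_g\eigbig(\Sigma_g)\,a'Ma .
\]

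Next, bound $a'Ma$ by $\eigsmall(M)$ from below and $\eigbig(M)$ from above, then take the supremum and infimum over unit vectors $a$. This gives
\[
\eigbig(\Phi)\le\sup_g\eigbig(\Sigma_g)\,\eigbig(M),\qquad \eigsmall(\Phi)\ge\inf_g\eigsmall(\Sigma_g)\,\eigsmall(M).
\]
Applying $\limsup$ and $\liminf$ and invoking (\ref{ass_6_Sigma}) and (\ref{ass_6_a}) in Assumption \ref{ass:eigsandinsts} delivers both claims. Assumption \ref{ass:errors} enters only to guarantee that $\Sigma$ exists, is block diagonal across clusters (errors in different clusters are independent), and has finite entries. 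It also matters that $\Phi$ is defined with the population $\Sigma$ rather than $\hat\Sigma$.

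One point needs care. If $Z$ were random and correlated with $\epsilon$, then $\mathbb{E}(Z'\Sigma Z)$ would have to be read as $\mathbb{E}(Z'\epsilon\epsilon'Z)$. In that case I would condition on $Z$ and use the assumption that $\epsilon$ and $Z$ are uncorrelated, or are exogenous, so that the conditional covariance is $\Sigma$. The paper's definition, however, already writes $\Phi=n^{-1}\mathbb{E}(Z'\Sigma Z)$ with $\Sigma$ nonrandom, so the pointwise inequality goes through directly. I therefore expect no real obstacle beyond keeping the uniformity in $g$ explicit, which the $\sup_g$ and $\inf_g$ in (\ref{ass_6_Sigma}) provide. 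The special case $G=n$ is covered as well, since the blocks are then scalars $\sigma_i^2$.
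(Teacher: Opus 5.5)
Your proposal is correct and follows the same route as the paper: bound $a'\Phi a$ above and below by $a'Ma$ times the extreme eigenvalues of $\Sigma$, then invoke (\ref{ass_6_Sigma}) and (\ref{ass_6_a}). Your blockwise bounds $\sup_g\eigbig(\Sigma_g)$ and $\inf_g\eigsmall(\Sigma_g)$ are just $\eigbig(\Sigma)$ and $\eigsmall(\Sigma)$ written out for the block-diagonal $\Sigma$; you also spell out the lower bound, which the paper only describes as similar.
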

\begin{proof}
Let $x$ be a non-stochastic $m\times 1 $ vector with $\Vert x \Vert=1$. Then
\[
x'\Phi x=\mathbb{E} \left(x'n^{-1}Z'\Sigma Zx\right)
\leq \mathbb{E}\left(x'n^{-1}Z'Z\eigbig(\Sigma)x\right)=(x'Mx)\eigbig(\Sigma)\leq \eigbig(M)\eigbig(\Sigma),
\]
uniformly over $x$ such that $\Vert x \Vert=1$. Then the claim for $\eigbig(\Phi)$ follows by (\ref{ass_6_Sigma}) and (\ref{ass_6_a}). The proof of the claim for $\eigsmall(\Phi)$ is similar.
\end{proof}



\renewcommand{\thesection}{E}
\setcounter{lemma}{0} \renewcommand{\thelemma}{E\arabic{lemma}}

\section{Original results}\label{app_C}

In this section, we reproduce and discuss the original empirical findings from Section \ref{illustrations}. These results serve as a reference for comparison with our nonparametric test results, allowing readers to assess how our procedure extends the original analyses.

\renewcommand{\thetable}{E\arabic{table}}
\renewcommand{\theequation}{E.\arabic{equation}}
\setcounter{table}{0}
\setcounter{equation}{0}

\subsection{Professional golf tournaments \citep{GuryanKroftNotowidigdo}}\label{app:gkn}

Table \ref{tab:GKNaugmented} reproduces the estimates in \citet{GuryanKroftNotowidigdo}, who study peer effects in professional golf tournaments.
Column~(i) reports the authors' baseline regression from
\begin{equation}
  y_{i,tr} = \alpha + \beta Ability_i + \gamma w_{i,tr}'Ability + \delta_{tc} + \varepsilon_{i,tr},
  \label{eq:gkn-baseline}
\end{equation}
where $y_{i,tr}$ is player~$i$'s score in round~$r$ of tournament~$t$; $Ability_i$
is own ability, measured by the average corrected handicap score over the last 2-3 years; and $w_{i,tr}'Ability$ is peer ability, defined below.\footnote{The authors therefore construct a simplified version of the United States Golf Association (USGA) handicap correction as a measure of ability. For more details on how this variable was originally constructed, see 
\citet{GuryanKroftNotowidigdo}.}


Let $W$ denote a peer-weight matrix on the stacked player--group--round--tournament observations, with generic element $w_{i,tr,j}$ giving the weight assigned to player~$j$ in player~$i$'s group at round~$r$ and tournament~$t$. The vector $w_{i,tr}$ is the corresponding row of $W$, and $w_{i,tr}'Ability \equiv \sum_{j \neq i} w_{i,tr,j}\,Ability_{t,j}$ is the weighted mean peer ability faced by player~$i$. The coefficients $\beta$ and $\gamma$ capture the effects of own and peer ability on scores; $\delta_{tc}$ denotes tournament-by-category fixed effects; and $\varepsilon_{i,tr}$ is an idiosyncratic error term.

Column (ii) considers alternative measures of playing partners' ability that may influence performance through different channels. Specifically, it replaces partners' handicap with measures such as \textit{driving distance}, number of \textit{putts}, and \textit{greens} hit in regulation, which help distinguish potential `learning' effects from pure `motivation' effects.\footnote{The intuition is that players may learn about wind or course conditions from observing another player's putting, but cannot directly learn to drive longer; the driving-distance coefficient thus captures the motivation component net of learning.} 
In Column~(iii), the authors also include interaction terms allowing peer effects to vary with player's own ability (based on the corrected handicap) and with experience (measured as the number of years the player has competed
professionally on the PGA Tour).

\begin{table}[p]

\centering
\footnotesize
\caption{Original results by \cite{GuryanKroftNotowidigdo}}
\label{tab:GKNaugmented}
\begin{tabular}{lccc}
\hline 
Dependent var. & (i) &  (ii) & (iii) \\
\hline
$Ability_{i}$ & 0.672\({}^{***}\)   &  & 0.656\({}^{***}\)  \\
& (0.039)  &  & (0.039) \\[0.5em]
$w_{i,tr}' Ability$ & -0.035  & & -0.036  \\
& (0.040)  & & (0.040) \\[0.5em]
$DrivDist_{i}$ &  & -0.009 & \\
&  & (0.004) & \\[0.5em]
$w_{i,tr}'  DrivDist$ & & 0.003 & \\
& & (0.004) &  \\[0.5em]
$Putts_{i}$ &  & 0.130\({}^{***}\) & \\
&  & (0.030) & \\[0.5em]
$w_{i,tr}'  Putts$ &  & -0.045 & \\
&  & (0.039) & \\[0.5em]
$Greens_{i}$ &  & -0.682\({}^{***}\)  & \\
&  & (0.050) & \\[0.5em]
$w_{i,tr}' Greens$ &  & -0.023 & \\
& &  (0.060) & \\[0.5em]
$Ability_{i} \times w_{i,tr}' Ability$  & & & 0.081 \\
& &   & (0.033) \\[0.5em]
$Exp_{it}$ & & & 0.019\({}^{***}\) \\
& &   & (0.004) \\ [0.5em]
$Exp_{it} \times w_{i,tr}' Ability $ & & & 0.015\({}^{**}\)  \\
& &   & (0.005) \\ 
\hline
$n$ & 17,492& 17,182 & 17,492 \\
\hline 
\end{tabular}
\begin{minipage}[t]{0.6\textwidth}
 \caption*{\footnotesize \textit{Notes}: Regression results in  Columns (i) and (ii) replicate Columns 1 and 5 from Table 5 in \cite{GuryanKroftNotowidigdo}. Column (iii) replicates Column 4 from Table 8 in \cite{GuryanKroftNotowidigdo}. The dependent variable is the golf score of player $i$ in a given round. $Ability_{i}$ denotes player $i$’s ability, measured by the average handicap score of the last 2-3 years. $w_{i,tr}' Ability = \sum_{j \neq i} w_{i,tr,j} Ability_j$ denotes the weighted average ability of player $i$’s peers $j$. Column (ii) includes additional ability measures such as driving distance ($DrivDist_i$), number of putts ($Putts_i$), and greens in regulation ($Greens_i$), together with the corresponding peer measures $w_{i,tr}' DrivDist$, $w_{i,tr}' Putts$, and $w_{i,tr}' Greens$.
 Column~(iii) further includes interactions between own ability and peer ability, $Ability_i \times w_{i,tr}' Ability$, and between experience and peer ability, $Exp_{it} \times w_{i,tr}' Ability$, where $Exp_{it}$ denotes player experience. All specifications include tournament-by-category fixed effects. Standard errors are clustered at the playing-group level. Observations are weighted by the inverse of the sample variance of the ability measure.
$^{*}p<0.10$, $^{**}p<0.05$, $^{***}p<0.01$.}
\end{minipage}
\end{table}
\floatpagestyle{empty}

Overall, these results based on linear specifications find 
limited evidence of peer effects in individual performance.
None of the peer coefficients in specifications (i) and (ii)
 are statistically significant. In specification (iii), which also includes interaction terms allowing peer effects to vary with a player's own ability and with experience, some evidence of heterogeneity emerges.   Interacting average peer ability  with own experience yields a small but statistically significant effect, 
suggesting that more experienced players respond more to their co-workers' ability.


\

\subsection{Interracial contact, stereotypes, and academic performance \citep{CornoLaFerraraBurns}}\label{app:clb}

Table \ref{tab:CLBregress} reproduces the academic performance estimates of Table 4 in \citet{CornoLaFerraraBurns}, which serve as the basis for the tests in Table \ref{tab:CLBtest}. For each academic outcome, the authors estimate

\begin{equation}\label{eq:clb}
y_{ik} = \alpha +  \lambda w_i'Race  + x_{i}'\beta + \mu w_i'x + \delta_{k} + \varepsilon_{ik},
\end{equation}
where $y_{ik}$ is the outcome of student $i$ in residence $k$, measured at the end of the first academic year;  $Race$ is a vector of race dummies of all students in the sample; $x_{i}$ is a set of individual baseline controls (gender, UCT admission score, household wealth, monthly consumption, foreign status, and private high school attendance); $x$ is a matrix stacking the vectors $x_i$ for all respective $i$; $\delta_k$ denotes residence fixed effects;
and $\varepsilon_{ik}$ is an idiosyncratic error term.  Let $W$ denote the room adjacency matrix, with generic element $w_{ij}=1$ if students $i$ and $j$ share a  room at the beginning of the academic year and $0$ otherwise, and let $w_i$ denote the $i$th row of $W$ (which selects student $i$'s roommate).  The specification is estimated separately on the White subsample, the Black subsample, and the full sample.\footnote{
In the original paper, the authors' treatment indicator is
$\textit{MixRoom}_i=|Race_i-w_i'Race|$, which equals one when student $i$
and their roommate are of different races. Within each race subsample,
this indicator is an affine function of the peer-race attribute $w_i'Race$ that we use in our notation. 
In the full sample, where own race varies, $MixRoom_i$ differs from
$w_i'Race$. Thus we obtain it directly as
$w_i^{\mathrm{race}}\mathbf{1}_n=MixRoom_i$ using  $W^{\mathrm{race}}$, which is the restriction of $W$ to different-race
roommate pairs.\label{footnote:w_corno}}
\begin{table}[p]
\footnotesize
\centering
\caption{Original results by \cite{CornoLaFerraraBurns}}\label{tab:CLBregress}
\resizebox{0.8\textwidth}{!}{%
\begin{tabular}{lcccc}
\toprule
Dependent var. & GPA & \begin{tabular}[c]{@{}c@{}}Number of\\ exams passed\end{tabular} & \begin{tabular}[c]{@{}c@{}}Eligible to\\ continue\end{tabular} & \begin{tabular}[c]{@{}c@{}}Academic\\ performance index\end{tabular} \\
 & (i) & (ii) & (iii) & (iv) \\
\midrule
\multicolumn{5}{l}{\textit{Panel A: Whites}}\\
$w_i'Race$ & $-$0.028 & $-$0.168 & 0.050 & 0.010 \\
 & (0.243) & (0.523) & (0.066) & (0.259) \\[0.5em]
$n$ & 117 & 117 & 117 & 117 \\
\midrule
\multicolumn{5}{l}{\textit{Panel B: Blacks}}\\
$w_i'Race$ & 0.257\({}^{**}\) & 0.645\({}^{***}\) & 0.152\({}^{***}\) & 0.443\({}^{***}\) \\
 & (0.125) & (0.245) & (0.040) & (0.141) \\[0.5em]
$n$ & 332 & 332 & 332 & 332 \\
\midrule
\multicolumn{5}{l}{\textit{Panel C: Full sample}}\\
$w_i^{\mathrm{race}}\mathbf{1}$ & 0.147 & 0.447\({}^{**}\) & 0.105\({}^{***}\) & 0.289\({}^{**}\) \\
 & (0.102) & (0.204) & (0.031) & (0.113) \\[0.5em]
$n$ & 499 & 499 & 498 & 498 \\
\bottomrule
\end{tabular}
}
\begin{minipage}[t]{0.8\textwidth}
\caption*{\footnotesize Note: Results originally presented in Table 4 of \cite{CornoLaFerraraBurns}. The dependent variable in column (i) is GPA (standardized over the full sample); in column (ii) it is the number of exams passed during the first year; in column (iii) it is a dummy for being in good standing and eligible to continue the following year; in column (iv) it is an index constructed as the first principal component of the previous three variables. Controls are measured at baseline and follow equation \eqref{eq:clb}. Individual and roommate controls include gender, UCT admission score, household wealth, monthly consumption, foreign status, and private high school attendance. Standard errors are clustered at the room level. $^{*}p<0.10$, $^{**}p<0.05$, $^{***}p<0.01$.}
\end{minipage}
\end{table}
The pattern in Table \ref{tab:CLBregress} is heterogeneous across groups. For White students (Panel A), the estimated effects of being in a mixed room are close to zero and statistically insignificant for all four outcomes. For Black students (Panel B), the effects are positive, sizable, and significant throughout.
In the full sample (Panel C), the estimates are positive and significant for exams passed, continuation, and the performance index, but insignificant for GPA.

\subsection{Ability mix and student performance \citep{wuzhangwang2023} }\label{app:wzw}

Table \ref{tab:peer_effects_mixed_seating} reproduces the deskmate-level peer-effect estimates of Table~5 in \citet{wuzhangwang2023}, which serve as the basis for the tests in Table \ref{tab:WZWtest}. In the mixed-seating (MS) and mixed-seating-with-reward (MSR) classes, students are randomly paired as deskmates within height groups. Students completed baseline (pre-intervention) and endline (post-intervention) surveys of academic performance and personality traits. For each endline outcome, the authors estimate

\begin{equation}\label{eq:wzw}
y_{i}^{\mathrm{end}} = \alpha + \beta y_{i}^{\mathrm{base}} + \gamma\, w_i'y^{\mathrm{base}} + x_i'\delta + h_{m(i)} + \varepsilon_i,
\end{equation}
where $y_{i}^{\mathrm{end}}$ is student~$i$'s endline outcome---either the average academic z-score (column~i) or one of the `big five' personality traits (columns~ii--vi)---and $y_{i}^{\mathrm{base}}$ is the corresponding baseline measure. Let $W$ denote the deskmate adjacency matrix on the estimation sample, with generic element $w_{ij}=1$ if $j$ is student~$i$'s assigned deskmate and $0$ otherwise. The vector $w_i$ is the corresponding row of $W$, so that $w_i'y^{\mathrm{base}}$ is the deskmate's baseline value of the same outcome. The vector $x_i$ collects individual controls (gender, age, height, health status, hukou registration, minority status, parental education, and household assets); $h_{m(i)}$ denotes class-by-height-group fixed effects; and $\varepsilon_i$ is an idiosyncratic error term.  The specification is estimated separately for lower- and upper-track students in the MS and MSR classes (Panels A--D).

\begin{table}[p]\centering
\caption{Original results by \cite{wuzhangwang2023}}
\label{tab:peer_effects_mixed_seating}
\footnotesize
\setlength{\tabcolsep}{4pt}
\begin{tabular}{
p{3.2cm}
c
c
c
c
c
c
}
\toprule
& \begin{tabular}[c]{@{}c@{}}Average\\ z-scores\\ (i)\end{tabular}
& \begin{tabular}[c]{@{}c@{}}Extraver-\\ sion \\ (ii)\end{tabular}
& \begin{tabular}[c]{@{}c@{}}Agreeable-\\ ness \\ (iii)\end{tabular}
& \begin{tabular}[c]{@{}c@{}}Openness \\ (iv)\end{tabular}
& \begin{tabular}[c]{@{}c@{}}Neuroticism \\ (v)\end{tabular}
& \begin{tabular}[c]{@{}c@{}}Conscien-\\ tiousness \\ (vi)\end{tabular} \\

\midrule

\multicolumn{7}{l}{\textit{Panel A. Lower-track students in MS classes}} \\
$w_i'y^{\mathrm{base}}$
& $-0.029$ & $0.054$ & $-0.059$ & $-0.048$ & $0.026$ & $0.044$ \\
& $(0.106)$ & $(0.044)$ & $(0.038)$ & $(0.057)$ & $(0.051)$ & $(0.049)$ \\
$n$
& 317 & 317 & 317 & 317 & 317 & 317 \\

\midrule

\addlinespace
\multicolumn{7}{l}{\textit{Panel B. Upper-track students in MS classes}} \\
$w_i'y^{\mathrm{base}}$
& $-0.032$ & $0.054$ & $0.084$ & $0.055$ & $0.095^{*}$ & $0.074$ \\
& $(0.034)$ & $(0.044)$ & $(0.054)$ & $(0.078)$ & $(0.052)$ & $(0.092)$ \\
$n$
& 317 & 317 & 317 & 317 & 317 & 317 \\ 

\midrule

\addlinespace
\multicolumn{7}{l}{\textit{Panel C. Lower-track students in MSR classes}} \\
$w_i'y^{\mathrm{base}}$
& $0.049$ & $0.225^{*}$ & $0.110^{**}$ & $0.108^{*}$ & $-0.075$ & $-0.022$ \\
& $(0.050)$ & $(0.124)$ & $(0.053)$ & $(0.061)$ & $(0.101)$ & $(0.086)$ \\
$n$ & 297 & 297 & 297 & 297 & 297 & 297 \\

\midrule
\addlinespace
\multicolumn{7}{l}{\textit{Panel D. Upper-track students in MSR classes}} \\
$w_i'y^{\mathrm{base}}$
& $0.032$ & $0.152$ & $0.168^{*}$ & $-0.089$ & $0.058$ & $-0.021$ \\
& $(0.038)$ & $(0.112)$ & $(0.099)$ & $(0.067)$ & $(0.059)$ & $(0.077)$ \\
$n$ & 297 & 297 & 297 & 297 & 297 & 297 \\

\bottomrule
\end{tabular}

\begin{minipage}[t]{0.9\textwidth}
\caption*{\footnotesize Note: Results originally presented in Table~5 of \cite{wuzhangwang2023}. The dependent variables are the student’s endline average z-scores (column i) or `big five' personality traits (columns ii–vi). Regressions are run for upper- and lower-track students in the MS and MSR classes separately. Controls include own baseline performance, gender, age, height, health status, hukou registration status, minority status, father’s education, mother’s education, and whether the student’s
household has a computer or a car. Standard errors are clustered at the class level. $^{*}p<0.10$, $^{**}p<0.05$, $^{***}p<0.01$.}
\end{minipage}
\end{table}

Table \ref{tab:peer_effects_mixed_seating} shows that, in MS classes (Panels A--B), deskmate baseline performance is essentially unrelated to endline outcomes: all coefficients are small and insignificant at the 5\% level (with only neuroticism among upper-track students significant at 10\%). Under MSR (Panels C--D), lower-track students show positive associations between deskmate baseline traits and own Extraversion, Agreeableness, and Openness (the Agreeableness coefficient is significant at 5\%; Extraversion and Openness at 10\%), while the academic z-score coefficient remains positive but insignificant at conventional levels. For upper-track MSR students, only Agreeableness responds (at 10\%). Overall, the linear specifications detect little peer dependence of deskmate baseline achievement on academic scores.

\subsection{Peer effects in academic research \citep{bosquetcombesetal2022}}\label{app:bch}

Table \ref{tab:peer_effects_ols} reproduces the OLS peer-effect estimates of Table~3 in \citet{bosquetcombesetal2022}, which serve as the basis for the tests in Table \ref{tab:BCHtest}. We first illustrate the intermediate specification (iii) which relates individual research output to the number and  quality of  peers as

\begin{equation}\label{eq:bch}
y_{ift} = \alpha + \lambda w_{i,ut}'Peer + \gamma w_{i,uft}'Output  +  x_{i,uft}'\beta + \varepsilon_{ift},
\end{equation}
where $y_{ift}$ is researcher~$i$'s output in JEL field~$f$ at date~$t$.\footnote{this is measured as the three-year moving average of the number of articles divided by the number of authors, using publications dated $t{+}1$, $t{+}2$ and $t{+}3$ to account for publication lag.} 
Let $w_{i,ut}$ denote a row of the matrix collecting weights on other members of university $u$ in the same year, and similarly let $w_{i,uft}$ collect weights on other members of university $u$ in the same year and JEL code.
Then $w_{i,ut}'Peer$ is the (normalized) number of peers per year and $w_{i,uft}'Output$ is the average productivity of those peers per year and JEL code, where each peer's productivity is defined as career-average publications.\footnote{Note that own and peer measures share the year index~$t$ but are not contemporaneous publication windows, because own measures are three-year moving average while peer measures are career-average publications statistics.} 
The vector $x_{i,uft}$ collects own age together with the fixed effects used in the original specification: individual fixed effects, university fixed effects, and year--JEL fixed effects; $\varepsilon_{ift}$ is an idiosyncratic error term.

In the specification of 
Column (i) the measurement of individual research output is taken at the university-year level rather than at the university-year-field level: that is, the dependent variable $y_{it}$ is individual output aggregated over all research fields, and the same for productivity of peers $w_{i,ut}'Output$. 

In the specification of 
Column (ii) the aggregation takes place at the university-year-field level for the dependent  variable $y_{ift}$, but the productivity of peers $w_{i,ut}'Output$ is still aggregated at university-year level only.

Columns~(iv)--(vi) extend column (iii) by further interacting $w_{i,uft}'Output$ with an indicator for female receivers and/or with (centred) age, allowing spillovers to vary by receiver characteristics.\footnote{Our tested university-level sample (Table \ref{tab:BCHtest}) contains 42,861 author--year observations, compared with the 42,521 reported by \citet{bosquetcombesetal2022} (Table \ref{tab:peer_effects_ols}). The difference arises because \citet{bosquetcombesetal2022} compute peer averages using the full university--year roster before dropping observations with missing age. To satisfy the SAR requirement of a square weights matrix, we retain all individuals used to construct peer averages while estimating the model on the restricted sample. This preserves the original peer averages and yields only negligible differences from the published estimates.
}

\begin{table}[!htbp]\centering
\caption{Original results by \cite{bosquetcombesetal2022}}
\label{tab:peer_effects_ols}
\footnotesize
\setlength{\tabcolsep}{7pt}
\begin{tabular}{lcccccc}
\toprule
& (i) & (ii) & (iii) & (iv)  &(v) & (vi)  \\
& Tot. & JEL & JEL & JEL & JEL & JEL \\
\midrule

$w_{i,ut}'Peer$
& $0.092^{*}$ & $0.000$ & $0.000$ & $0.000$ & $0.001$ & $0.001^{**}$ \\
& $(0.047)$ & $(0.000)$ & $(0.001)$ & $(0.001)$ & $(0.000)$ & $(0.000)$ \\

$w_{i,ut}'Output$
& $0.055$ & $-0.001$ & &  &  &  \\
& $(0.265)$ & $(0.002)$ &  &  &  &  \\

$w_{i,uft}'Output$ 
& & & $0.623^{***}$ & $0.693^{***}$ & $0.607^{***}$ & $0.697^{***}$ \\
& & & $(0.016)$ & $(0.017)$ & $(0.014)$ & $(0.016)$ \\

$Woman_i \times w_{i,uft}'Output$
& & & & $-0.316^{***}$ & & $-0.418^{***}$ \\
& & & & $(0.022)$ & & $(0.022)$ \\

$Age_i \times w_{i,uft}'Output$
& & & & & $-0.022^{***}$ & $-0.025^{***}$ \\
& & & & & $(0.001)$ & $(0.002)$ \\

\midrule
$n$

& $42{,}521$ & $771{,}498$ & $771{,}498$ & $771{,}498$ & $771{,}498$ & $771{,}498$ \\
\bottomrule
\end{tabular}
\begin{minipage}[t]{0.95\textwidth}
\caption*{\footnotesize \textit{Notes:} Results originally presented in Table~3 of \cite{bosquetcombesetal2022}. The dependent variable is the yearly individual research output (three-year moving average of articles divided by number of authors), aggregated over fields in column~(i) and measured at the JEL-code level from column~(ii) on. The term $w_{i,ut}'Peer$ s is the (normalized) number of peers at the university-year level. 
The terms $w_{i,ut}'Output$ and $w_{i,uft}'Output$ 
are the average peer output at the university-year level and university--year--JEL level respectively. All regressions include age (centered at the sample mean), individual, and university fixed effects; column~(i) includes year fixed effects and columns~(ii)--(vi) year--JEL fixed effects. Standard errors are clustered at the university--year level.
$^{*}p<0.10$, $^{**}p<0.05$, $^{***}p<0.01$.}
\end{minipage}
\end{table}

Table \ref{tab:peer_effects_ols} shows no robust peer effects when peers are defined as the entire university: in columns~(i)--(ii), both the peer-count and average peer-output coefficients are small and insignificant at the 5\% level. Once peer output is measured within the same JEL field (columns~iii--vi), the linear specifications detect significant positive spillovers: one additional publication by field peers raises own field output by about 0.6--0.7 publications. The interactions indicate that these gains are smaller for women and for older researchers. The peer-count term remains close to zero throughout, so the authors conclude that same-field colleagues' productivity---rather than department size---is the relevant channel.

\end{document}